%% file: main.tex
\documentclass[11pt]{article}

\usepackage[]{agmacros}
\usepackage{andrews_bib}
\title{An Improved Construction of Variety-Evasive Subspace Families}
\author{Robert Andrews\thanks{Cheriton School of Computer Science, University of Waterloo. Email: randrews@uwaterloo.ca.} \and Abhibhav Garg\thanks{Cheriton School of Computer Science, University of Waterloo. Email: abhibhav.garg@uwaterloo.ca.}}
\date{}

\begin{document}

\maketitle

\begin{abstract}
  We study the question of explicitly constructing variety-evasive subspace families, a pseudorandom primitive introduced by Guo (Computational Complexity 2024) that generalizes both hitting sets and lossless rank condensers.
  Roughly speaking, a variety-evasive subspace family $\cH$ is a collection of subspaces such that for every algebraic variety $V$ in a fixed family $\cF$, there is some subspace $W \in \cH$ that is in general position with respect to $V$.

  We give an explicit construction of a subspace families that evade all degree-$d$ varieties in an $n$-dimensional affine or projective space.
  Our construction improves on the size of the variety-evasive subspace families constructed by Guo and, for varieties of degree $n^{1 + \Omega(1)}$, comes within a polynomial factor of Guo's lower bound on the size of any such variety-evasive subspace family.
  Our variety-evasive subspace families rely on an improved construction of hitting sets for Chow forms of algebraic varieties.
\end{abstract}

\section{Introduction} \label{section: introduction}
\input{sections/introduction}

\section{Preliminaries} \label{section: preliminaries}
\input{sections/preliminaries}

\section{Construction of variety-evasive subspace families} \label{section: construction}
\input{sections/construction}
\printbibliography

\end{document}

%% file: sections/introduction.tex
\subsection{Background}

The probabilistic method is a powerful tool with widespread use throughout mathematics.
Many objects of interest throughout combinatorics and computer science, such as Ramsey graphs and randomness extractors, can be shown to exist via the probabilistic method, often with excellent parameters.
While probabilistic arguments excel at demonstrating existence, they fall short of providing an efficient, deterministic means of constructing the object of interest.
One of the main goals of pseudorandomness is to obtain explicit constructions of objects that exhibit properties similar to those enjoyed by a randomly-chosen object.
Not only are explicit constructions interesting in their own right, but they are all the more important in algorithmic applications such as derandomization, where the aim is to reduce or eliminate the use of randomness as an algorithmic resource.

Over time, the theory of pseudorandomness has developed into a rich area of theoretical computer science that studies a variety of objects, including expander graphs, list-decodable codes, randomness extractors, pseudorandom generators, and the numerous connections and relationships between them \cite{Vadhan12}.
More recently, a parallel theory of linear-algebraic pseudorandomness has also appeared.
This theory studies collections of linear maps that satisfy various pseudorandom properties, including rank extractors and rank condensers, subspace designs, dimension expanders, and subspace-evasive sets, where the dimension of vector spaces plays a role analogous to that of min-entropy in classical pseudorandomness.

In this work, we study the construction of \emph{variety-evasive subspace families}, a linear-algebraic pseudorandom object introduced by \textcite{Guo24}.
Roughly speaking, a variety-evasive subspace family $\cH$ is a collection of linear or affine spaces such that for any algebraic variety $V$ of interest, there is some subspace $W \in \cH$ that is in general position with respect to $V$.
For example, if $V$ is a curve in the plane and $\cH$ is a collection of lines, then there should be some line $W \in \cH$ such that the intersection $V \cap W$ is nonempty and zero-dimensional.
More generally, when the variety $V$ is fixed, most linear spaces (and in fact, most varieties) $W$ will satisfy $\codim{V \cap W} = \codim{V} + \codim{W}$, and a variety-evasive subspace family must contain a subspace $W$ that intersects $V$ in a manner similar to a random subspace.
This is the sort of behavior captured by the notion of one variety evading another, defined below.

\begin{definition}[Evasiveness]
  Let $V$ and $W$ be irreducible subvarieties of $\bP^n$ or $\bA^n$.
  We say that $W$ \emph{evades} $V$ if
  \[
    \dim(V \cap W) \le \dim(V) + \dim(W) - n.
  \]
  If $V$ and $W$ are affine varieties, we say that $W$ \emph{strongly evades} $V$ if the above holds with equality whenever the right-hand side is nonnegative.
  If $V$ is a reducible variety, we say that $W$ \emph{(strongly) evades} $V$ if $W$ (strongly) evades each irreducible component of $V$.
\end{definition}

When $V$ and $W$ are affine varieties, it may be the case that $V \cap W = \varnothing$, even though $\codim{V} + \codim{W}$ is not large enough to force this intersection to be empty in general.
The notion of $W$ strongly evading $V$ adds the constraint that the intersection $V \cap W$ should have the expected codimension of $\codim{V} + \codim{W}$ whenever this is possible.

As a matter of notational convenience, we also define the notion of a $k$-subspace family.

\begin{definition}[$k$-Subspace family]
  For $0 \le k \le n$, a \emph{$k$-subspace family} is a finite collection of $k$-dimensional subspaces of $\bP^n$.
  Similarly, an \emph{affine $k$-subspace family} is a finite collection of $k$-dimensional affine subspaces of $\bA^n$.
\end{definition}

We can now define variety-evasive subspace families, the main object of study in this work.

\begin{definition}[Variety-evasive subspace families \cite{Guo24}]
  Let $\cF$ be a family of subvarieties of $\bP^n$ (or of $\bA^n$) and let $\cH$ be a $k$-subspace (or affine $k$-subspace) family.
  \begin{enumerate}
    \item 
      We say that the family $\cH$ is \emph{(strongly) $\cF$-evasive} if for every $V \in \cF$, there is some $W \in \cH$ that (strongly) evades $V$.
      In the case where $\cF$ is the family of all degree-$d$ subvarieties, we say that $\cH$ is \emph{(strongly) $(n,d)$-evasive}.
    \item
      We say that $\cH$ is \emph{(strongly) $(\cF, \eps)$-evasive} if for every $V \in \cF$, a randomly-chosen $W \in \cH$ (strongly) evades $V$ with probability at least $1 - \eps$.
      When $\cF$ is the family of all degree-$d$ subvarieties, we say that $\cH$ is \emph{(strongly) $(n,d,\eps)$-evasive}.
      \qedhere
  \end{enumerate}
\end{definition}

As observed by \textcite{Guo24}, variety-evasive subspace families are a common generalization of two natural problems in algebraic pseudorandomness.
The first such problem is the construction of hitting sets for arithmetic circuits, a common approach to derandomizing algorithms for the polynomial identity testing problem.
A hitting set $\cH \subseteq \bF^n$ is a collection of points such that for every nonzero polynomial $f \in \cC$ in a circuit class $\cC$ of interest, there is some point $\valpha \in \cH$ such that $f(\valpha) \neq 0$.
The point $\valpha$ witnesses the fact that $f$ is nonzero as a polynomial.
An efficient, deterministic construction of such a hitting set immediately implies a deterministic algorithm with similar complexity for testing polynomials $f \in \cC$.
Geometrically, one can view a hitting set as a set of points $\cH$ such that for every hypersurface $V$ defined by a polynomial $f \in \cC$, there is a point $\valpha \in \cH$ such that $\valpha \notin V$.
Taking $\cF_\cC$ to be the family of hypersurfaces defined by polynomials in $\cC$, a hitting set is precisely a $\cF_\cC$-evasive $0$-subspace family.

The second task generalized by variety-evasive subspace families is the construction of lossless rank condensers.
Rank condensers, first appearing in the work of \textcite{GR08}, are another of the myriad objects studied in linear-algebraic pseudorandomness.
A lossless rank condenser $\cE \subseteq \bF^{r \times n}$ is a collection of linear maps $\bF^n \to \bF^r$ such that for every $r$-dimensional subspace $V$, there is some map $E \in \cE$ where the image of $V$ under $E$ remains $r$-dimensional.
Dualizing, this is precisely the same as requiring the kernel of $E$ to only intersect $V$ at the origin, or equivalently, that the linear spaces in projective space $\bP^{n-1}$ corresponding to $\ker E$ and $V$ have empty intersection.
Thus, a lossless rank condenser is exactly the same object as an $(n-r-1)$-subspace family in $\bP^{n-1}$ that evades all dimension-$(r-1)$ linear spaces.

Besides their generalization of other tasks in algebraic pseudorandomness, explicitly constructing variety-evasive subspace families is a natural task from the point of view of algebraic geometry.
For example, as we will see later, variety-evasive subspace families can easily be used to construct Noether normailizing maps for algebraic varieties.

\textcite{Guo24} gives the only known construction of variety-evasive subspace families.
To describe his result, we need the following function.
For $n, d \in \bN$ and $k \in \bc{0,1,\ldots,n-1}$, let $N(k,d,n)$ be the function given by
\[
  N(k,d,n) \coloneqq \min\br{\binom{(k'+1)(n'+1+d)}{(k'+1)d}, \binom{(n-k)(n'+1+d)}{(n-k)d}}
\]
where $k' \coloneqq \min(k, d-2)$ and $n' \coloneqq n - (k - k')$.
\textcite[Theorem 1.7]{Guo24} constructed $(n,d,\eps)$-evasive $k$-subspace families of size 
\[
  \poly\br{N(k,d,n), n, \eps^{-1}}.
\]
Moreover, this construction is explicit and can be carried out in time polynomial in the size of the variety-evasive subspace family $\cH$ (and $\log p$ if the field $\bF$ has characteristic $p > 0$).
Guo's work complements this with a lower bound that shows any $(n, d)$-evasive $k$-subspace family $\cH$ must have size at least
\[
  |\cH| \ge \begin{cases}
    (n-k)(k+1)+1 & \text{if $d = 1$,} \\
    \max\br{d(n-k)(k+1)+1, \binom{d+n-k}{d} + (n-k+1)k} & \text{if $d > 1$.}
  \end{cases}
\]
This lower bound is tight in the sense that there is a non-explicit construction of a variety-evasive $k$-subspace family whose size precisely matches the lower bound \cite[Section 4]{Guo24}.

\subsection{Our results}

We give an improved construction of $(n,d,\eps)$-evasive $k$-subspace families.
Just like Guo's construction, ours is explicit and can be carried out in time polynomial in the size of the evasive subspace family $\cH$ (and $\log p$, if $\bF$ has characteristic $p > 0$).

\begin{restatable}{theorem}{maintheorem} \label{theorem: main}
  Let $n, d \in \bN$, $k \in \bc{0,1,\ldots,n-1}$, and $0 < \eps < 1$.
  There is an explicit $(n,d,\eps)$-evasive $k$-subspace family (resp.\ strongly $(n,d,\eps)$-evasive affine $k$-subspace family) $\cH$ of size $\poly(((n-k)d + 1)^{n-k}, \eps^{-1})$.
  Moreover, there is a deterministic algorithm that on input $(n, d, \eps)$ runs in time $\poly(|\cH|)$ and prints defining equations of the subspaces in $\cH$.
\end{restatable}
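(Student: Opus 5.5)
We reduce evasiveness to hitting a single polynomial, the Chow form of the variety, and then build a small hitting set for Chow forms. The whole proof works with the codimension $c \coloneqq n-k$.

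\textbf{Step 1 (reduction to a Chow form).} Fix an irreducible $V \subseteq \bP^n$ of dimension $r$ and degree at most $d$. A $k$-dimensional subspace $W$ fails to evade $V$ exactly when $\dim(V \cap W) > r - c$. We parametrize $W$ as the common zero set of $c$ linear forms $\ell_1,\ldots,\ell_c$, with coefficient matrix $A \in \bF^{c \times (n+1)}$.

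\begin{itemize}
\item If $r \ge c$, intersect with $r+1-c$ further generic hyperplanes. Then $W$ evades $V$ if and only if the Chow form $R_V$, a multihomogeneous form in $r+1$ groups of $n+1$ variables of degree $\deg V$ in each group, does not vanish identically after we substitute $A$ into $c$ of the groups and leave the remaining groups generic.
\item If $r < c$, evasion means $V \cap W = \varnothing$. This is equivalent to $R_V$ not vanishing on some choice of $r+1$ linear combinations of the rows of $A$.
\end{itemize}

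In either case, evasion is witnessed by nonvanishing of a polynomial $P_V(A)$ in the $c(n+1)$ entries of $A$, of degree at most $d$ in each of $c$ row groups, so of total degree at most $cd$. Reducible $V$ has at most $d$ components of total degree at most $d$, so taking the product over components keeps the total degree at most $cd$.

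\textbf{Step 2 (hitting set for Chow forms).} Sampling $A$ from a grid $S^{c(n+1)}$ with $|S| \approx cd/\eps$ already gives an $(n,d,\eps)$-evasive family by Schwartz--Zippel, but its size is exponential in $n$. To reach $((n-k)d+1)^{n-k}$ we instead exploit the invariance of $P_V$.

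\begin{itemize}
\item Apply an explicit ``generic projection'' $\pi \colon \bP^n \dashrightarrow \bP^{c}$ drawn from a small family, for example a Vandermonde-style or Kronecker-substitution map whose coefficients are powers of a single parameter $t$. Choose $\pi$ so that it is finite on $V$ and preserves dimensions of intersections. We then only need $W = \pi^{-1}(W')$ for $W'$ a point or small subspace in $\bP^c$.
\item The requirement becomes hitting a degree-$cd$ polynomial in roughly $c$ variables (the entries relevant to the reduced problem) plus one parameter $t$. A grid of side $cd+1$ in $c$ variables has $(cd+1)^{c}$ points, and the univariate parameter $t$ contributes only $\poly(n,d,\eps^{-1})$ choices.
\item To upgrade existence to the $\eps$ guarantee, we use an $\eps$-biased choice: take all points of a grid of side $\poly(cd/\eps)$ in the $c$ essential coordinates, and apply Schwartz--Zippel to the composed polynomial. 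This gives size $\poly((cd+1)^c,\eps^{-1})$, where the polynomial blowup absorbs the larger grid and the $t$-values.
\end{itemize}

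\textbf{Step 3 (affine case and explicitness).} For the strong affine statement, homogenize $V$ to its projective closure $\bar V$. Require $W$ to evade both $\bar V$ and $\bar V \cap H_\infty$. Then no component of $V \cap W$ lies at infinity, and the expected dimension is attained whenever it is nonnegative, which gives strong evasion. This costs only a constant factor in the number of polynomials we must hit. Explicitness follows because every member of the family is given by formulas in the grid point and $t$, computable in $\poly(|\cH|)$ time.

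\textbf{Main obstacle.} The hard step is Step 2: showing that the structured projection keeps $P_V \circ \pi$ nonzero for every $V$ of degree at most $d$. The parameter-$t$ substitution must separate monomials of $R_V$ of degree at most $cd$, so $t$ must range over roughly $(cd)^{O(1)} n$ values. I would first try a Kronecker-style map $x_i \mapsto t^{i}$ combined with Guo's characterization of Chow-form support via Newton polytopes, and argue that the lexicographically leading monomial survives.
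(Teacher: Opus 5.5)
Your Steps 1 and 3 are sound in outline, and Step 3 is essentially the paper's affine reduction (\cref{lemma: strong affine evasive}). There is a genuine gap in Step 2, which you yourself flag as the main obstacle; it carries all the content of the theorem, and as proposed it does not work, for three reasons.

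First, asking a linear projection $\pi$ from $\bP^n$ to $\bP^c$ to be finite on $V$ means its $(k-1)$-dimensional center must miss $V$. That is again a variety-evasion problem of the same type, so Step 2 assumes a small evasive family rather than building one.

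Second, the Kronecker/leading-monomial plan is unsupported. For arbitrary polynomials of degree $cd$ in $c(n+1)$ variables, any hitting set needs at least $\binom{c(n+1)+cd}{cd}$ points, since you can interpolate a nonzero polynomial through fewer points. So a one-parameter substitution with $(cd)^{O(1)}n$ values of $t$ could only succeed by exploiting special structure of Chow forms, and you do not supply that structure.

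Third, the $\eps$-upgrade is wrong. A grid of side $\poly(cd/\eps)$ in $c$ coordinates has $\eps^{-\Omega(c)}$ points, which is exactly the $(1/\eps)^{n-k}$ loss the theorem is designed to avoid. You need an $\eps$-hitting set of size $\poly((D+1)^c,\eps^{-1})$ for individual-degree-$D$ polynomials in $c$ variables, as in \cref{lemma: epsilon hitting sets}. For example, Kronecker-substitute the $c$ variables into one and take $(D+1)^c/\eps$ values.

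The missing idea is that no projection is needed to reach $c$ variables. Substitute the moment curve into each group of the Chow form separately, $u_{i,j}\mapsto v_i^{\,j}$. This restriction is nonzero for every $V$ by an elementary argument:
\begin{enumerate}
\item Pick one point $p$ on each of the at most $d$ components.
\item Each $\sum_j p_j z^j$ is a nonzero univariate polynomial of degree at most $n$, so all but at most $nd$ values $\gamma$ give a hyperplane $\sum_j \gamma^j x_j=0$ that cuts every component properly.
\item Iterate; the degree stays at most $d$ by B\'ezout. This yields $\gamma_0,\dots,\gamma_r$ at which the restricted Chow form is nonzero (\cref{cor: chow form hsg}).
\end{enumerate}
After Guo's reduction to equidimensional $V$ of dimension $n-k-1$, this gives a nonzero polynomial in $n-k$ variables of individual degree $nd$, which the $\eps$-hitting set handles.

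To remove the dependence on $n$, the paper uses \cref{lemma: small span}: $V$ lies in a linear space $X$ of dimension at most $rd$. A center then only has to miss $X$, which is a linear condition. Rank extractors achieve this with $\poly(n,d,\eps^{-1})$ members and map $X$ isomorphically onto $\bP^{rd}$ (\cref{lemma: ambient reduction}). Your projection to $\bP^c$ instead requires the center to miss $V$ itself, which is why your route becomes circular.
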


To compare our construction with that of Guo, we use the simplified bound from \cite[Theorem 1.7]{Guo24}, which asserts an explicit construction of an $(n,d,\eps)$-evasive $k$-subspace family of size $\poly(n^{\min(k+1, n-k, d) d}, d, \eps^{-1})$.
In contrast, we construct explicit $(n,d,\eps)$-evasive $k$-subspace families of size $\poly(((n-k)d+1)^{n-k}, \eps^{-1})$.
This improves the dependence on $d$, which may be exponentially larger than $n$ and $k$; for example, a variety cut out by $n/2$ quadratic equations will generally have degree $2^{n/2}$.

We also give a simple construction of $(n,d)$-evasive $k$-subspace families that eliminates the polynomial overhead appearing in \cref{theorem: main}.

\begin{restatable}{theorem}{simpleconstruction} \label{theorem: easy construction}
  Let $n, d \in \bN$ and $k \in \bc{0,1,\ldots,n-1}$.
  There is an explicit $(n,d)$-evasive $k$-subspace family (resp.\ strongly $(n,d)$-evasive affine $k$-subspace family) $\cH$ of size at most $(nd + 1)^{n-k}$.
  Moreover, there is a deterministic algorithm that receives $n$, $d$, and $k$ as input and prints the subspace family $\cH$ in time $(nd+1)^{O(n-k)}$.
\end{restatable}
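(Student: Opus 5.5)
We prove \cref{theorem: easy construction} by reducing evasiveness to non-vanishing of a Chow-type polynomial, and then hitting that polynomial with a product grid.

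\textbf{Reduction to codimension-one steps.} A $k$-subspace $W \subseteq \bP^n$ has codimension $n-k$, so write $W = H_1 \cap \cdots \cap H_{n-k}$ with $H_i = \{\ell_i = 0\}$ linear hyperplanes. We take each $\ell_i$ from a fixed one-parameter family of moment-curve forms
\[
  \ell_{t}(x) = x_0 + t x_1 + t^2 x_2 + \cdots + t^n x_n ,
\]
and let $\cH$ consist of all $W_{\vec t} = \{\ell_{t_1} = \cdots = \ell_{t_{n-k}} = 0\}$ with $(t_1,\ldots,t_{n-k}) \in S^{n-k}$, where $S \subseteq \bF$ has size $nd+1$. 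This gives $|\cH| \le (nd+1)^{n-k}$, and listing the defining equations takes time $(nd+1)^{O(n-k)}$.

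\textbf{Greedy argument.} Fix $V$ of degree $d$, which has at most $d$ irreducible components, each of degree at most $d$. We choose $t_1, t_2, \ldots$ in turn, maintaining the invariant that $V_i := V \cap H_{t_1} \cap \cdots \cap H_{t_i}$ evades appropriately: every component of $V_i$ has dimension at most $\dim V - i$ (or $V_i$ is empty). By Bézout, $\deg V_i \le d$, so $V_i$ has at most $d$ components $Z$.

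A hyperplane $H_t$ fails to cut a component $Z$ of positive dimension down by one exactly when $Z \subseteq H_t$. Pick any point $p \in Z$ off... more carefully, pick $n+1$ points spanning the linear span of $Z$. Since $Z$ is not contained in the zero set of every $\ell_t$ (the $\ell_t$ span all linear forms as $t$ varies over $n+1$ values, by Vandermonde), there is a point $p \in Z$ with $\ell_t(p)$ a nonzero polynomial in $t$ of degree $\le n$. Hence at most $n$ values of $t$ have $Z \subseteq H_t$. Over the at most $d$ components, at most $nd$ values are bad, so some $t \in S$ works.

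This greedy argument shows that for every $V$ some element of the grid works, which is exactly what $(n,d)$-evasiveness requires. Zero-dimensional components need no care, since the inequality is trivially satisfied once the dimension bound is negative. The one point to check is the degree bound $\deg V_i \le d$, which holds by the refined Bézout inequality for intersections with hyperplanes. I expect this bookkeeping, especially the handling of embedded or lower-dimensional excess components, to be the main technical step. Those components still number at most $\deg$ in the refined Bézout sense, so the count of bad parameters stays at most $nd$.

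\textbf{Affine strong version.} Work in $\bA^n \subseteq \bP^n$ with affine hyperplanes $1\cdot y_0 + t y_1 + \cdots$ (dehomogenized). We must additionally ensure that the intersection is nonempty of the expected dimension. Apply the projective argument to the closure $\bar V$ together with its part at infinity $\bar V \cap \{x_0 = 0\}$, of degree $\le d$. Choosing $t$ to also properly cut the components at infinity guarantees that no component of $V_i$ lies entirely at infinity, so each affine piece drops by exactly one dimension. The bad-parameter count becomes at most $nd$ after noting that the hyperplane at infinity and the finite part share the degree bound; if needed, this is where I would allow a slightly larger $S$, still of size $nd+1$ by merging counts.
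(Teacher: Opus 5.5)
Your construction and greedy strategy match the paper's. Both use moment-curve hyperplanes $\var{x_0 + t x_1 + \cdots + t^n x_n}$, a grid of at most $(nd+1)^{n-k}$ parameter tuples, and one point $p$ per irreducible component. Any point $p$ works, since $p \neq 0$ makes $\ell_t(p) = \sum_j p_j t^j$ a nonzero polynomial of degree at most $n$, so no spanning argument is needed. Bézout then keeps the degree at most $d$.

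However, your claim that zero-dimensional components ``need no care'' is false, and it breaks the argument. With $\dim W = k$, evasion of a component $V_j$ requires $\dim(V_j \cap W) \le \dim V_j - (n-k)$. When the right-hand side is negative, this says $V_j \cap W = \varnothing$, which is a real constraint. In the case everything reduces to by \cref{lemma: reduction to codim k} ($V$ equidimensional of dimension $n-k-1$), the whole job of the last hyperplane is to miss the finitely many points of $V_{n-k-1}$. Already for $n=1$, $k=0$ and $V$ a single point, your rule puts no constraint on $t$. Likewise, your invariant that every component of $V_i$ has dimension at most $\dim V - i$ only controls top-dimensional components. Take $V$ to be a plane union a point off it in $\bP^3$, with $k=2$: a hyperplane through the point but not containing the plane satisfies your invariant yet does not evade $V$.

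The repair is to do what \cref{lemma: small parameter intersection} does: require $\ell_t(p) \neq 0$ for a point $p$ on every component, isolated points included. Your count of at most $n$ bad values per component absorbs this at no cost. Tracking the components of each $V_j \cap H_{t_1} \cap \cdots \cap H_{t_i}$ separately, with Bézout applied to each $V_j$, also settles the excess-component bookkeeping you worry about.

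Separately, a single set $S$ makes $S^{n-k}$ contain tuples with repeated entries, and these give subspaces of dimension larger than $k$. While $V_{i-1} \neq \varnothing$, earlier values are automatically roots, but once $V_{i-1} = \varnothing$ nothing stops a repeat. The paper avoids this with pairwise disjoint blocks $B_1, \ldots, B_{n-k}$, each omitting $0$. The $0$ must be omitted because, in the affine chart, $t=0$ gives the inconsistent equation $1 = 0$.

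Your affine paragraph has a counting gap that the hedge at the end does not close. Suppose you separately demand that $H_t$ properly cut each component of $\overline{V}$ and each component of $\overline{V} \cap \var{x_0}$. Then you must exclude roots coming from up to $2d$ points, i.e.\ up to $2nd$ values of $t$; take $V$ to be $d$ affine lines with distinct directions. So $|S| = nd+1$ does not suffice, and enlarging $S$ destroys the bound $(nd+1)^{n-k}$.

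The missing observation, which is the content of \cref{lemma: small parameter intersection affine}, is that the points at infinity do double duty. For each positive-dimensional component $\overline{V}_i$, you pick points only on the components of $\overline{V}_i \cap \var{x_0}$. Since these points lie on $\overline{V}_i$, a hyperplane missing them cannot contain $\overline{V}_i$, and it also cuts the part at infinity properly. Adding the single point of each zero-dimensional component, Bézout bounds the total number of points by $\sum_i \deg \overline{V}_i \le d$, hence at most $nd$ bad parameters. You then iterate on the affine variety $V \cap \var{1 + t x_1 + \cdots + t^n x_n}$, whose degree stays at most $d$ by the affine Bézout inequality.
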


When $d \ge n^{1 + \eps}$ for some constant $\eps > 0$, \cref{theorem: easy construction} comes within a polynomial factor of Guo's lower bound on the size of $(n,d)$-evasive $k$-subspace families.
To see this, observe that we can bound the size of the $(n,d)$-evasive $k$-subspace family constructed in \cref{theorem: easy construction} by $d^{O(n-k)}$.
On the other hand, Guo showed that any such $k$-subspace family must have size at least
\[
  \binom{d+n-k}{n-k} \ge \br{1 + \frac{d}{n-k}}^{n-k} \ge \br{1 + \frac{d}{n}}^{n-k} \ge d^{\Omega(n-k)}.
\]

As an application of our main result, we give an explicit construction of Noether normalizing maps.
Noether normalization is a useful basic result in commutative algebra.
In geometric language, the Noether normalization lemma states that for every $r$-dimensional affine variety $V \subseteq \bA^n$, there is a surjective finite morphism $\pi : V \to \bA^r$.
Finite morphisms are particularly nice maps of algebraic varieties: every subvariety of $V$ is mapped to a subvariety of $\bA^r$, and the preimage $\pi^{-1}(p)$ of any point $p \in \bA^r$ will be a finite set.

Over an infinite field, a randomly-chosen linear map $\pi : \bA^n \to \bA^r$ will be Noether normalizing with high probability.
\textcite[Theorems 1.12 and 1.13]{Guo24} observed that Noether normalizing maps can easily be constructed from an $(n,d)$-evasive subspace family.
As a corollary of \cref{theorem: main}, we obtain the following explicit construction of Noether normalizing maps.

\begin{corollary} \label{cor: noether normalize}
  Let $n, d \in \bN$, $r \in \bc{0,1,\ldots,n}$, $k \coloneqq n - r - 1$, and let $\eps > 0$.
  There is an explicit collection $\cL$ of linear maps $\bA^n \to \bA^r$ of size $\poly(((n-k)d)^{n-k}, \eps^{-1})$ and computable in time $\poly(\abs{\cL})$ such that the following hold.
  \begin{enumerate}
    \item 
      For every projective variety $V \subseteq \bP^n$ of dimension $r-1$ and degree at most $d$, at least a $1-\eps$ fraction of maps $\pi \in \cL$ induce a surjective finite morphism from $V$ to $\bP^{r-1}$.
    \item 
      For every affine variety $V \subseteq \bA^n$ of dimension $r$ and degree at most $d$, at least a $1-\eps$ fraction of maps $\pi \in \cL$ restrict to a surjective finite morphism from $V$ to $\bP^r$.
  \end{enumerate}
\end{corollary}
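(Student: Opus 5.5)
We derive the corollary from \cref{theorem: main} via the reduction of \textcite[Theorems 1.12 and 1.13]{Guo24}. Set $k = n-r-1$ and apply \cref{theorem: main} to get an explicit $(n,d,\eps)$-evasive $k$-subspace family $\cH$ in $\bP^n$ of size $\poly(((n-k)d+1)^{n-k},\eps^{-1})$. For each $W \in \cH$, pick $r$ linearly independent linear forms $\ell_1,\ldots,\ell_r$ in $x_0,\ldots,x_n$ cutting out $W$; one Gaussian elimination on the printed defining equations does this. Let $\pi_W = (\ell_1,\ldots,\ell_r)$, giving the linear projection $\bP^n \dashrightarrow \bP^{r-1}$ from center $W$, and put $\cL = \{\pi_W : W \in \cH\}$. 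Then $|\cL| \le |\cH|$, and $\cL$ is computable in time $\poly(|\cH|)$.

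For part 1, let $V \subseteq \bP^n$ have dimension $r-1$ and degree at most $d$. If $W$ evades every component of $V$, then
\[
  \dim(V\cap W) \le (r-1) + (n-r-1) - n < 0,
\]
so $V \cap W = \varnothing$. The standard fact that projection from a linear center disjoint from a projective variety is a finite morphism then shows that $\pi_W|_V$ is finite. Finite maps preserve dimension, so the image of $V$ is closed of dimension $r-1$ in $\bP^{r-1}$, and $\pi_W|_V$ is surjective. By evasiveness, this holds for at least a $1-\eps$ fraction of $W \in \cH$. A variety of degree at most $d$ lies in the family covered by the evasiveness guarantee, possibly after taking the union over $d' \le d$ or noting that the construction handles degree $\le d$.

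For part 2, let $V \subseteq \bA^n$ have dimension $r$ and degree at most $d$. Embed $\bA^n \subseteq \bP^n$ as $x_0 \ne 0$ and let $\overline V$ be the projective closure, of dimension $r$ and the same degree. The statement now concerns linear maps $\bA^n \to \bA^r$, so I would use the hyperplane at infinity. Let $V_\infty = \overline V \cap \{x_0=0\}$, which has dimension $r-1$ and degree at most $d$ inside $\bP^{n-1}$. Apply part 1 one dimension lower: use \cref{theorem: main} with ambient space $\bP^{n-1}$ and subspace dimension $(n-1)-(r-1)-1 = n-r-1 = k$. This gives linear forms $\ell_1,\ldots,\ell_r$ in $x_1,\ldots,x_n$ whose common zero set $W$ in $\bP^{n-1}$ misses $V_\infty$. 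Then the common zero locus of $\ell_1,\ldots,\ell_r$ and $x_0$ misses $\overline V$. So projection from this center, $\overline V \to \bP^r$ given by $(x_0,\ell_1,\ldots,\ell_r)$, is finite. Restricting over the chart $x_0 \ne 0$ shows that $\pi = (\ell_1,\ldots,\ell_r)\colon V \to \bA^r$ is finite; it is the preimage of an affine open under a finite map. Since $\dim V = r$, $\pi$ is surjective. The size bound has the same form, since the ambient dimension only drops by one.

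The main obstacle is the bookkeeping in part 2. The affine statement has to be reduced to the projective one at infinity, and the parameters must be checked to land in the same $\poly(((n-k)d)^{n-k},\eps^{-1})$ bound. I also need to confirm the standard finiteness criterion: a projection is finite on $V$ exactly when its center misses $\overline V$. The stated target $\bP^r$ in part 2 I read as $\bA^r$, or equivalently as the finite map of closures to $\bP^r$.
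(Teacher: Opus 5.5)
\textbf{Verdict: genuine gap in part 1 (off-by-one dimension count); part 2 is correct.} Your part 2 is essentially Guo's reduction, which is all the paper invokes. Part 1 breaks on the dimension count.

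A member $W$ of a $k$-subspace family in $\bP^n$ with $k=n-r-1$ has codimension $n-k=r+1$. So it cannot be cut out by $r$ linearly independent forms in $x_0,\ldots,x_n$. Indeed, $r$ such forms cut out a space $W'\supseteq W$ of dimension $k+1=n-r$. Either way you repair this, the argument fails:
\begin{itemize}
\item If you project from $W$ itself, you need $r+1$ forms and land in $\bP^{r}$. The finite image of your $(r-1)$-dimensional $V$ is then a proper closed subset of $\bP^r$, so surjectivity fails. The slack in your count $(r-1)+(n-r-1)-n=-2$ is the symptom: a finite surjective projection needs this count to be exactly $-1$.
\item If you keep only $r$ of the forms, the center is $W'$. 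Knowing that $W$ evades $V$ says nothing about $W'\cap V$, so finiteness is not established.
\end{itemize}
Relatedly, your part 1 maps use $n+1$ variables, so they are not maps $\bA^n\to\bA^r$. You also end up with two different collections for the two items, whereas the statement asks for a single $\cL$.

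The fix is already in your part 2. Maps $\bA^n\to\bA^r$ act projectively from $\bP^{n-1}$ to $\bP^{r-1}$. So item 1 only makes sense with $V\subseteq\bP^{n-1}$; the $\bP^n$ there is a slip, like the $\bP^r$ in item 2 that you already caught.
\begin{itemize}
\item Take the $k$-subspace family from \cref{theorem: main} with ambient space $\bP^{n-1}$. Since $k=(n-1)-(r-1)-1$, each member has codimension exactly $r$ in $\bP^{n-1}$ and is cut out by $r$ forms in $x_1,\ldots,x_n$.
\item For a $1-\eps$ fraction of members, evasion gives $\dim(C\cap W)\le (r-1)+k-(n-1)=-1$ for every component $C$ of $V$. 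So the center misses $V$.
\item Projection from that center is then finite on $V$, with closed image of dimension $r-1$ in $\bP^{r-1}$, hence surjective. This proves item 1.
\end{itemize}
Item 2 then follows from item 1 applied to $V_\infty$, exactly as you wrote, using the same $\cL$. The size $\poly(((n-1-k)d+1)^{n-1-k},\eps^{-1})$ is within the claimed bound.
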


The proof of \cref{cor: noether normalize} is exactly the same as \cite[Theorems 1.12 and 1.13]{Guo24}, but using the $(n,d,\eps)$-evasive subspace families of \cref{theorem: main} instead of the subspace families constructed by Guo, so we omit further details.

%% file: sections/preliminaries.tex
Throughout this work, we use $\bF$ to denote an algebraically closed field.
The ring of polynomials in $n+1$ variables is denoted $\bF\bs{x_{0}, \dots, x_{n}}$.
We use bold letters for vectors, whose length will be clear from context.
We use vector notation to denote monomials: given $\va \in \bN^{n+1}$, we write $\vx^{\va}$ for the monomial $x_{0}^{a_{0}} \cdots x_{n}^{a_{n}}$.
We use $\bA^{n}$ and $\bP^{n}$ to denote $n$-dimensional affine and projective spaces, respectively.

\subsection{Algebraic geometry} \label{subsection: algebraic geoemtry}

We assume basic algebraic geometry at the level of \cite{shafarevich1994basic1}.
For us, an \emph{algebraic variety} is a Zariski-closed subset of $\bA^n$ or $\bP^n$; in particular, a variety may be reducible.
We adopt the convention that the degree of a variety is the sum of the degrees of all of its irreducible components.
For this notion of degree, the following inequality holds, which we call Bézout's inequality.
\begin{lemma}[B\'{e}zout's inequality]
  \label{lemma: bezout}
  For any two varieties $V$ and $W$, we have $\deg(V \cap W) \leq \deg V \cdot \deg W$.
\end{lemma}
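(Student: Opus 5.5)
The approach is the standard one: reduce to an intersection of hypersurfaces, then control degree by induction along a chain of intersections.

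\emph{Reduction to a hypersurface.} Since $\deg$ is defined as the sum of the degrees of the irreducible components, intersection distributes over components: $V \cap W = \bigcup_i (V_i \cap W)$ for the components $V_i$ of $V$. Each irreducible component of $V \cap W$ is then an irreducible component of some $V_i \cap W$ (possibly of several). This gives $\deg(V\cap W) \le \sum_i \deg(V_i \cap W)$, and applying the same decomposition to $W$ lets us assume $V$ and $W$ are irreducible. In the projective case we next write $W$ set-theoretically as an intersection of hypersurfaces. A convenient choice is to take $W$ as the intersection of finitely many hypersurfaces $H_1, \dots, H_m$, each of degree $\deg W$. These can be obtained as cones over $W$ from generic projections $\bP^n \dashrightarrow \bP^{\dim W+1}$; the image of $W$ under such a projection is a hypersurface of degree $\deg W$, and finitely many generic projections cut out $W$ exactly. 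However, repeatedly intersecting $V$ with $m$ hypersurfaces of degree $\deg W$ would give the weaker bound $\deg V \cdot (\deg W)^m$, so this route loses too much.

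\emph{Reduction to the diagonal.} The cleaner route is to write $V \cap W \cong (V \times W) \cap \Delta$ inside $\bA^n \times \bA^n = \bA^{2n}$, working affinely and homogenizing or taking cones for the projective case. There $\Delta$ is a linear space cut out by $n$ linear equations, and $\deg(V\times W) = \deg V \cdot \deg W$. This reduces the statement to a single key claim: for any variety $X$ and hyperplane $H$, $\deg(X \cap H) \le \deg X$. Iterating the claim $n$ times gives the lemma.

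\emph{Proving the key claim.} We prove it for irreducible $X$ and then sum over components. If $X \subseteq H$, then $X \cap H = X$ and the bound is equality. Otherwise $X \cap H$ has pure dimension $\dim X - 1$ by Krull's principal ideal theorem. A generic linear space $L$ of complementary dimension $n-\dim X+1$ meets each component of $X\cap H$ transversally. Then $L\cap H$ is a generic linear space of dimension $n - \dim X$, and it meets $X$ in at most $\deg X$ points. Hence $\sum_j \deg(C_j) = |X\cap H\cap L| \le \deg X$, where the $C_j$ are the components of $X \cap H$.

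The main obstacle is verifying that $\deg(V\times W)=\deg V\deg W$ and that the affine-to-projective passage works under the paper's reducible-degree convention. The genericity argument in the key claim also has to be checked: a generic linear space inside $H$ must still meet $X$ in finitely many points counted with a bound. For the first point I would use the Hilbert polynomial of the product, or simply cite the standard references (Fulton; Heintz's affine Bézout inequality) from which this lemma is normally quoted.
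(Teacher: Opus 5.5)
The paper gives no argument of its own here. It quotes Heintz (Theorem~1) for the affine case and Fulton's intersection theory for the projective case. Your outline is essentially Heintz's own proof: reduce to irreducible components, identify $V\cap W$ with $(V\times W)\cap\Delta$, bound $\deg(V\times W)$, and cut by the $n$ hyperplanes $x_i=y_i$ using $\deg(X\cap H)\le\deg X$. Where the paper cites Fulton, you derive the projective case from the affine one via cones. That is sound, provided you note $\deg C(Z)=\deg Z$ and treat $V\cap W=\varnothing$ separately, since the cones still meet at the origin. Your second paragraph is a dead end and can be dropped. Also, only $\deg(V\times W)\le\deg V\cdot\deg W$ is needed, and the Hilbert function of $\bF[V]\otimes\bF[W]$ with the total-degree filtration gives it directly.

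The citation buys the paper brevity. Your outline exposes where the real content lies: the bound $\#(X\cap M)\le\deg X$ for an \emph{arbitrary} linear space $M$ of complementary dimension with $X\cap M$ finite. This is needed because your $M=L\cap H$ is generic only inside $H$, not in the ambient space.

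In Heintz's setup this bound holds by definition, since degree is a supremum over such $M$. The work then shifts to showing the supremum is attained on a dense open set, so that a single $L$ serves all the $C_j$ at once. With the Hilbert-polynomial degree you need a genuine argument. One option is the incidence correspondence over the Grassmannian, together with the fibre-cardinality bound for finite maps onto a normal base. Another is positivity of intersection multiplicities for proper intersections.

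Be careful in the affine setting. Even when $X\cap M$ is finite, $\overline{X}\cap\overline{M}$ can have positive-dimensional components at infinity; two disjoint parallel planes in $\bA^4$ are an example. So any projective count there must be restricted to isolated points. You can avoid this by running the whole argument projectively, with the join in $\bP^{2n+1}$ in place of $V\times W$. There a finite intersection with a complementary linear space is automatically proper. The affine case then follows because closures of components of $V\cap W$ are components of $\overline{V}\cap\overline{W}$.
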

A proof of the affine version of \cref{lemma: bezout} is given in \cite[Theorem~1]{Heintz83}, while a proof of the projective version follows immediately from \cite[Chapter~2.3]{fulton1984introduction}.

We also need the following lemma, which shows that varieties of small degree and dimension are contained in linear subspaces of small dimension.
\begin{lemma}
  \label{lemma: small span}
  For every projective variety $V$, there exists a linear subspace $L$ of dimension at most $\dim V \cdot \deg V$ such that $V \subseteq L$.
\end{lemma}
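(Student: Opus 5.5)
The plan is to prove the statement for \emph{irreducible} $V$, deduce it from the classical lower bound on the degree of a nondegenerate variety, and then look at what survives for reducible $V$. As written, the statement cannot hold for arbitrary reducible varieties (see the last paragraph), so the lemma should be read with $V$ irreducible, or with a weaker bound in general.

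Let $V \subseteq \bP^n$ be irreducible with $r \coloneqq \dim V$ and $d \coloneqq \deg V$. Let $L$ be the linear span of $V$, that is, the smallest linear subspace containing it, and put $m \coloneqq \dim L$. Then $V$ is a nondegenerate irreducible subvariety of $L \cong \bP^m$.

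The key step is the classical inequality $\deg V \ge \codim_L V + 1 = m - r + 1$. I would prove it by induction on $r$ using hyperplane sections.
\begin{enumerate}
  \item For $r = 0$, the variety $V$ is a point, so $m = 0$ and $d = 1$.
  \item For $r \ge 1$, choose a general $r$ points of $V$ spanning an $(r-1)$-plane $\Lambda$. Extend $\Lambda$ by further points of $V$ one at a time, which is possible because $V$ spans $L$. This produces a linear subspace $M$ of codimension $r$ in $L$ that meets $V$ in at least $m - r + 1$ points spanning $M$.
  \item By genericity the intersection $V \cap M$ is zero-dimensional. Bézout's inequality (\cref{lemma: bezout}), applied with $M$ of degree $1$, then gives $d \ge \#(V \cap M) \ge m - r + 1$.
\end{enumerate}
The main obstacle is justifying that this intersection is finite. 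If the naive incremental choice is delicate, I would instead use the Bertini-type fact that a general hyperplane section of an irreducible nondegenerate variety of dimension at least $2$ is again irreducible and nondegenerate in the hyperplane, and induct on $r$ down to the curve case. For a nondegenerate curve, a hyperplane through $m$ general points of the curve meets it in at least $m$ points, so $d \ge m$.

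Granting the inequality, we get $m \le r + d - 1$. For $r \ge 1$ we have $r + d - 1 \le rd$, since $rd - r - d + 1 = (r-1)(d-1) \ge 0$. For $r = 0$ we have $m = 0 = rd$. In both cases $L$ is the required subspace, with $\dim L \le \dim V \cdot \deg V$.

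For reducible $V$ the bound fails as stated. Two distinct points have dimension $0$ and degree $2$ but span a line. Two skew lines in $\bP^3$ have dimension $1$ and degree $2$ but span all of $\bP^3$. What the argument does give in general is the following. Let the components be $V_i$ with dimensions $r_i$ and degrees $d_i$. Spans of unions satisfy $\dim(L_1 + L_2) + 1 \le (\dim L_1 + 1) + (\dim L_2 + 1)$, so the span of $V$ has dimension at most
\[
  \sum_i (r_i + d_i) - 1 \le (\dim V + 1)\deg V - 1.
\]
I would therefore state and use the lemma for irreducible $V$, or with the bound $(\dim V + 1)\deg V$, and check that the later construction only needs it in one of these forms.
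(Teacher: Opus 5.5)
Your proposal is correct, and on the reducible case it is more accurate than the paper.

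For irreducible $V$ you follow the paper's route. You take the linear span $L$ of $V$ and use the classical bound $\deg V \ge \dim L - \dim V + 1$, which the paper simply cites as \cite[Corollary~18.12]{harris2013algebraic}. Of your two sketches of that bound, use the second: cut down by general hyperplane sections to a nondegenerate curve in $\bP^m$, then pass a hyperplane through $m$ independent points of the curve. The first sketch is not sound as written, for two reasons:
\begin{itemize}
  \item its dimension count cannot produce an $(m-r)$-plane when $r-1 > m-r$;
  \item finiteness of $V \cap M$ does not follow ``by genericity'' when $M$ is forced to be spanned by points of $V$.
\end{itemize}

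For reducible $V$, the paper's proof asserts $\dim L \le \sum_i \dim L_i$ for the span $L$ of $L_1, \dots, L_t$. This is false. The correct inequality is $\dim L \le \sum_i (\dim L_i + 1) - 1$. With it, the paper's own computation yields exactly your bound $\sum_i (r_i + d_i) - 1 \le (\dim V + 1)\deg V - 1$. Your examples (two points; two skew lines in $\bP^3$) are valid. They show that the statement as written is false, even for equidimensional $V$.

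On your last point: \cref{lemma: small span} is used only in \cref{lemma: ambient reduction}. There $V$ is equidimensional of dimension $r$ but possibly reducible, so the irreducible version is not enough. Your general bound is enough: it gives a subspace of dimension at most $(r+1)d - 1$ containing $V$. That reduction goes through verbatim with $\bP^{(r+1)d-1}$ in place of $\bP^{rd}$. The hitting sets it then requires are for polynomials in $n-k$ variables of individual degree less than $(n-k)d^2 \le ((n-k)d+1)^2$. So the size bound in \cref{theorem: main} changes only by a polynomial factor.
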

\begin{proof}
  First, suppose $V$ is irreducible.
  Let $L \subseteq \bP^{n}$ be a minimal linear subspace such that $V \subseteq L$.
  By \cite[Corollary~18.12]{harris2013algebraic}, we can deduce that $V$ has codimension at most $\deg V - 1$ within $L$.
  It follows that $L$ has dimension at most $\dim V + \deg V - 1$.

  Now let $V$ be an arbitrary projective variety and let $V_{1}, \dots, V_{t}$ be the irreducible components of $V$.
  Each $V_{i}$ is contained in a linear subspace $L_{i}$ of dimension at most $\dim V_{i} + \deg V_{i} - 1$.
  Let $L$ be the smallest linear subspace that contains $\bigcup_{i=1}^{t} L_{i}$.
  Since each irreducible component $V_i$ is contained in $L$, the variety $V$ is likewise contained in $L$.
  We have 
  \[
    \dim L \leq \sum_{i=1}^{t} \dim L_{i} \leq \sum_{i=1}^{t} (\dim V_{i} + \deg V_{i} - 1).
  \]
  By Bézout's inequality, we have $\sum_{i=1}^{t} \deg V_{i} \leq \deg V$ and $t \leq \deg V$.
  Combined with the fact that $\dim V_{i} \leq \dim V$, the required statement follows.
\end{proof}

\subsection{Variety-evasive subspace families}

In this subsection, we gather some preparatory lemmas for the construction of variety-evasive $k$-subspace families.
In particular, we will see that to construct $(n,d,\eps)$-evasive $k$-subspace families, it suffices to construct a $k$-subspace family that evades degree-$d$ varieties of dimension $n - k - 1$, and that we may further assume these varieties live in an ambient space of dimension $(n-k-1) d$.
In addition to this, we will see that an $(n,d,\eps)$-evasive $k$-subspace family automatically yields an $(n,d,\eps')$-evasive affine $k$-subspace family with only a mild loss in $\eps'$.

We first cite a lemma of \textcite{Guo24} that shows constructing an $(n,d,\eps)$-evasive $k$-subspace family reduces to the task of constructing a $k$-subspace that $\eps$-evades equidimensional subvarieties of dimension $n - k - 1$ and degree $d$.

\begin{lemma}[{\cite[Lemma~3.1]{Guo24}}]
  \label{lemma: reduction to codim k}
  Let $\cF$ be the family of equidimensional projective subvarieties of $\bP^n$ of dimension $n - k - 1$ and degree at most $d$.
  Suppose $\cH$ is an $(\cF, \eps)$-evasive $k$-subspace family.
  Then $\cH$ is $(n, d, \varepsilon)$-evasive.
\end{lemma}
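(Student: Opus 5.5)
Fix a projective variety $V \subseteq \bP^n$ of degree at most $d$. I will show that whenever a $k$-subspace $W$ evades every equidimensional variety of dimension $n-k-1$ and degree at most $d$ that is built from $V$, it also evades every irreducible component of $V$. If this holds, then for each $V$ the set of good $W\in\cH$ is at least the set of $W$ that evade one suitable auxiliary variety $V' \in \cF$. That set has density at least $1-\eps$, so the lemma follows. For each component, the plan is to reduce to the critical dimension $n-k-1$ by enlarging or cutting it.

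First take an irreducible component $Z$ of $V$ with $r \coloneqq \dim Z$. Evasion of $Z$ by $W$ means $\dim(Z\cap W)\le r+k-n$, and this is read as $Z\cap W=\varnothing$ when $r+k-n<0$. There are two cases.

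If $r\le n-k-1$, I enlarge $Z$ to a cone $C_Z$ over $Z$ with a vertex that is a general linear space $\Lambda$ of dimension $n-k-2-r$. The cone $C_Z$ is irreducible of dimension $n-k-1$, and its degree is $\deg Z$ since $\Lambda$ is general. If $W$ evades $C_Z$ then $C_Z\cap W=\varnothing$, hence $Z\cap W=\varnothing$. When $r=n-k-1$, simply take $C_Z=Z$.

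If $r\ge n-k$, I cut $Z$ instead. Choose a general linear subspace $M$ of codimension $r-(n-k-1)$. By Bertini/Bézout, every component of $Z\cap M$ has dimension exactly $n-k-1$, and $\deg(Z\cap M)\le\deg Z$. Suppose $W$ evades $Z\cap M$, so $(Z\cap M)\cap W=\varnothing$. Then $(Z\cap W)\cap M=\varnothing$. Since $M$ has codimension $r-n+k+1$, projective dimension theory forces $\dim(Z\cap W)\le r-n+k$, which is exactly evasion.

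Now combine the components. Let $V'$ be the union of the $C_Z$ over small components and the $Z\cap M$ over large components. Then $V'$ is equidimensional of dimension $n-k-1$, and its degree is at most $\sum_Z \deg Z\le d$. Hence $V'\in\cF$, and any $W$ evading $V'$ evades each component of $V'$ and so evades $V$. Therefore $\Pr_{W\in\cH}[W \text{ evades } V]\ge \Pr_{W\in\cH}[W\text{ evades } V']\ge 1-\eps$.

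The main obstacle is the genericity choices. The vertex $\Lambda$ and the slice $M$ must be chosen once, independently of $W$, so that $\dim C_Z$ and $\deg C_Z$ and the dimension and degree of $Z\cap M$ are correct. This requires $\Lambda\cap Z=\varnothing$, with the cone having the right dimension and degree, and $M$ meeting $Z$ properly. This is possible because $\bF$ is algebraically closed, hence infinite, so a nonempty Zariski-open set of choices exists. Crucially, $\Lambda$ and $M$ need not depend on $W$: the implications "$W$ evades $C_Z$ implies $W$ evades $Z$" and "$W$ evades $Z\cap M$ implies $W$ evades $Z$" hold for every $W$.
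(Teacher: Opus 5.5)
The paper does not prove this lemma; it quotes it from Guo's Lemma~3.1. So there is no in-paper argument to compare against, and your proof is correct and self-contained.

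Your construction does what is needed. You slice each component $Z$ with $r = \dim Z \ge n-k$ by a general linear space of codimension $r-(n-k-1)$. You cone each component with $r < n-k-1$ over a vertex $\Lambda$ disjoint from it. This produces a single $V' \in \cF$, fixed independently of $W$, such that $W \cap V' = \varnothing$ forces $W$ to evade every component of $V$. In the first case this follows from the projective dimension theorem, and in the second from $Z \subseteq C_Z$. That is exactly what transfers the $1-\eps$ guarantee from $V'$ to $V$.

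One small simplification: you only use $\deg C_Z \le \deg Z$, and this holds for every vertex $\Lambda$ with $\Lambda \cap Z = \varnothing$, not just a general one. Let $\pi_\Lambda$ be the linear projection from $\Lambda$. Then $\deg C_Z = \deg \pi_\Lambda(Z)$. For a general codimension-$r$ linear space $L'$ in the target, the finite set $Z \cap \pi_\Lambda^{-1}(L')$ surjects onto $\pi_\Lambda(Z) \cap L'$, so B\'ezout's inequality gives $\deg \pi_\Lambda(Z) \le \deg Z$. Asserting equality would additionally require showing that a general projection is birational onto its image, which your argument never needs.
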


Next, we show that the dimension of the ambient space can be reduced.
As a consequence of \cref{lemma: small span}, it suffices to construct variety-evasive subspace families when the ambient dimension $n$ is at most $\deg V \cdot \dim V$.
Essentially the same reduction is used in the constructions of \textcite{Guo24}.
We give a proof for completeness.

\begin{lemma}
  \label{lemma: ambient reduction}
  Let $r \coloneqq n - k - 1$ and let $\cH$ be a given $(rd-r-1)$-subspace family of linear subspaces of $\bP^{rd}$ of size $T$.
  \begin{enumerate}
    \item 
      If $\cH$ is $(rd, d)$-evasive, then we can construct an $(n, d)$-evasive $k$-subspace family of size $\poly\br{T, n, d}$ in time $\poly\br{T, n, d}$.
    \item
      If $\cH$ is $(rd, d, \eps)$-evasive, then we can construct an $(n, d, 2 \eps)$-evasive $k$-subspace family of size $\poly\br{T, n, d, \eps^{-1}}$ in  time $\poly\br{T,n,d,\eps^{-1}}$.
  \end{enumerate}
\end{lemma}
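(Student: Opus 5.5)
By \cref{lemma: reduction to codim k}, it suffices to produce a $k$-subspace family $\cH'$ of $\bP^n$ such that every equidimensional $V \subseteq \bP^n$ of dimension $r = n-k-1$ and degree at most $d$ is evaded by some (resp.\ a $1-2\eps$ fraction of) $W \in \cH'$. Note that evasion for such $V$ means $V \cap W = \varnothing$, since $\dim V + \dim W - n = -1$. By \cref{lemma: small span}, every such $V$ lies in a linear subspace of dimension at most $rd$; we may enlarge it to a subspace $L \cong \bP^{rd}$ (assuming $rd \le n$; otherwise apply $\cH$ after padding, or take $L=\bP^n$ directly). The plan is to combine $\cH$ with a family $\cP$ of linear projections $\pi : \bP^n \dashrightarrow \bP^{rd}$ such that, for every $(rd)$-dimensional $L$, some or most $\pi \in \cP$ restrict to an isomorphism $L \to \bP^{rd}$.

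Concretely, $\pi$ is given by a $(rd+1) \times (n+1)$ matrix $A$. The condition that $A$ restricted to $L$ is an isomorphism is exactly that $\ker A$, a linear space of dimension $n-rd-1$ in $\bP^n$, avoids $L$. This is a lossless rank condenser, i.e.\ a family evading $rd$-dimensional linear spaces. I would use the standard explicit construction: a Vandermonde-type matrix $A_\alpha$ with entries $\alpha^{ij}$. The determinant of $A_\alpha B$, where $B$ is a basis matrix of $L$, is a nonzero univariate polynomial in $\alpha$ of degree $\poly(n,d)$, since it is nonzero at generic Vandermonde rows by Cauchy--Binet. Taking $\alpha$ from a set $S$ of size $\poly(n,d)$, which is $\poly(n,d,\eps^{-1})$ in part (2), guarantees that at least one choice, resp.\ a $1-\eps$ fraction of choices, is good. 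The output family is then
\[
\cH' = \{\, \pi_\alpha^{-1}(W) : \alpha \in S,\ W \in \cH \,\},
\]
where $\pi_\alpha^{-1}(W)$ denotes the linear span of $\ker A_\alpha$ together with a lift of $W$, that is, the projective space of $\{x : A_\alpha x \in \hat W\}$. Its dimension is $(rd - r - 1) + (n - rd) = n - r - 1 = k$, as required, provided $A_\alpha$ has full rank, which can be ensured by the same argument. The size of $\cH'$ is $T \cdot |S|$.

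For correctness, suppose $\pi_\alpha|_L$ is an isomorphism. Then $\pi_\alpha$ is defined on all of $V$, and $\pi_\alpha(V)$ is a variety in $\bP^{rd}$ of the same dimension $r$ and degree at most $d$, since degree is preserved under a linear isomorphism. Moreover $V \cap \pi_\alpha^{-1}(W) = \varnothing$ if and only if $\pi_\alpha(V) \cap W = \varnothing$, because $V \cap \ker A_\alpha = \varnothing$. The latter condition is exactly $W$ evading $\pi_\alpha(V)$ in $\bP^{rd}$, where $r + (rd-r-1) - rd = -1$. For part (1), pick a good $\alpha$ and then the $W \in \cH$ that evades $\pi_\alpha(V)$. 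For part (2), the fraction of bad pairs is at most $\eps$ (bad $\alpha$) plus $\eps$ (bad $W$ given good $\alpha$), giving $2\eps$.

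The main obstacle I expect is making the rank-condenser step precise with the right degree bound and the right kernel-dimension bookkeeping. I will also handle the edge case $rd > n$, where I instead embed $\bP^n \subseteq \bP^{rd}$ and intersect each $W$ with $\bP^n$, needing general position so that the dimension stays exactly $k$. In that case I would again use a Vandermonde family to choose the embedding.
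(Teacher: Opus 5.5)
Your argument is the same as the paper's. You cover $V$ by an $rd$-dimensional linear space via \cref{lemma: small span}, and use a lossless rank condenser so that projection from its kernel maps that space isomorphically onto $\bP^{rd}$; the paper cites the Forbes--Shpilka rank extractor where you use a Vandermonde family. You then pull back the members of $\cH$ and get $2\eps$ by a union bound.

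\textbf{The shared step fails.} The problem is \cref{lemma: small span}, which you inherit from the paper: it is false for reducible $V$. Two distinct points have $\dim V\cdot\deg V=0$ but lie in no $0$-dimensional linear space. Two skew lines in $\bP^3$ have $\dim V\cdot\deg V=2$ but lie in no plane. The error in its proof is the inequality $\dim\lspan{L_1,\dots,L_t}\le\sum_i\dim L_i$, which should be $\sum_i(\dim L_i+1)-1$. For equidimensional $V$ of dimension $r$ and degree at most $d$, the corrected inequality only gives $\dim\lspan{V}\le tr+d-1\le (r+1)d-1$. This bound is attained by $d$ general $r$-planes once $n$ is large enough. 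So whenever $d\ge 2$ and $rd<n$, there are admissible $V$ lying in no $rd$-dimensional linear space; two skew lines with $n=3$, $k=1$ already suffice. No linear projection to $\bP^{rd}$ is injective on the span of such a $V$. Hence neither your argument nor the paper's proves the lemma as stated.

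\textbf{What the argument does prove.} It proves, verbatim, the version with $rd$ replaced by $m\coloneqq(r+1)d-1$. From an $(m,d,\eps)$-evasive $(m-r-1)$-subspace family of $\bP^m$ you obtain an $(n,d,2\eps)$-evasive $k$-subspace family; the dimension count is $(m-r-1)+(n-m)=k$. This is all the proof of \cref{theorem: main} uses. There the reduction is invoked only when $n>(n-k)d$, so $m<n$. Moreover, an $\eps$-hitting set for polynomials of individual degree $md$ in $n-k$ variables has size $\poly((md+1)^{n-k},\eps^{-1})$, which is $\poly(((n-k)d+1)^{n-k},\eps^{-1})$ since $md+1\le((n-k)d+1)^2$.

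\textbf{Two smaller remarks.}
\begin{enumerate}
\item Cauchy--Binet together with nonvanishing at generic Vandermonde rows does not show that $\det(A_\alpha B)$ is a nonzero univariate polynomial. You need the Gabizon--Raz/Forbes--Shpilka argument: with row exponents starting at $i=1$, the Gale-maximal row basis of $B$ contributes a unique top-degree monomial. If the exponents start at $i=0$, the all-ones row forces $\ker A_\alpha$ into $\{\sum_j x_j=0\}$. Then every $rd$-dimensional $L$ inside that hyperplane meets the kernel for every $\alpha$.
\item In the case $rd>n$, no general-position argument is needed. Evasion here means empty intersection, so any $k$-dimensional subspace of $W\cap\bP^n$ works.
\end{enumerate}
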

\begin{proof}
  We first carry out the construction of an $(n,d)$-evasive $k$-subspace family in detail.
  After this, we briefly remark how to modify the construction to obtain an $(n,d,\eps)$-evasive $k$-subspace family.

  We begin by constructing an $(n,1)$-evasive $(n-rd-1)$-subspace family $\widehat{\cH}$.
  By \cref{lemma: reduction to codim k}, it suffices to build a collection $W_{1}, \dots, W_{s}$ of linear subspaces in $\bP^{n}$ such that for any linear subspace $L \subset \bP^{n}$ of dimension $rd$, there is some $W_{i}$ such that $L \cap W_{i} = \emptyset$.
  Such a family of subspaces $W_i$ is precisely the nullspaces of the matrices in a \emph{rank extractor}.
  \textcite{FS12}, together with the improved analysis of \textcite{FSS14}, gave an explicit construction of such a rank extractor of size $s = rd(n-rd)+1$.
  For each $i \in [s]$, we pick points $v_{i, 0}, \dots, v_{i, n} \in \bP^{n}$ such that $\lspan{v_{i, 0}, \dots, v_{i, n}} = \bP^{n}$ and $\lspan{v_{i, rd+1}, \dots, v_{i, n}} = W_{i}$.

  Let $\cH = \bc{Y_{1}, \dots, Y_{T}} \subseteq \bP^{rd}$ be the given $(rd, d)$-evasive $(rd-r-1)$-subspace family.
  For each $Y_{i}$, we pick $rd - r$ points $u_{i, 0}, \dots, u_{i, rd - r - 1} \in \bP^{rd}$ that span $Y_{i}$.

  Now we define linear subspaces $Z_{i, j} \subseteq \bP^{n}$ for every $1 \leq i \leq s$ and $1 \leq j \leq T$ that will form the required $(n, d)$-evasive $k$-subspace family.
  Let $\phi_{i} : \bP^{rd} \to \bP^n$ be the linear map that sends the point $e_{\ell}$ to $v_{i, \ell}$, where $e_{\ell}$ denotes the point in $\bP^{rd}$ whose $\ell$-th coordinate is $1$ and whose remaining coordinates are zero.
  We take $Z_{i,j}$ to be the $k$-subspace given by
  \[
    Z_{i,j} \coloneqq \lspan{v_{i,rd+1}, \ldots, v_{i,n}, \phi_i(u_{j,0}), \ldots, \phi_i(u_{j, rd-r-1})}.
  \]
  Given the subspace families $\cH$ and $\widehat{\cH}$, it is clear that we can construct the subspaces $Z_{i,j}$ in $\poly(T,n,d)$ time by iterating over the product set $\cH \times \widehat{\cH}$.

  It remains to show that the $k$-subspace family $\bc{Z_{i,j} : i \in [s], j \in [T]}$ is $(n, d)$-evasive.
  To this end, let $V \subseteq \bP^{n}$ be an equidimensional variety of dimension $r$ and degree $d$.
  By \cref{lemma: small span}, there is a linear subspace $X$ of dimension $rd$ that contains $V$.
  Since $\widehat{\cH} = \bc{W_{1}, \dots, W_{s}}$ is $(n, 1)$-evasive, there is some $W_i$ such that $W_i \cap X = \varnothing$; without loss of generality, we may assume that $W_{1} \cap X = \emptyset$.
  For each $0 \leq i \leq rd$, the intersection $\lspan{W_{1}, v_{i, 1}} \cap X$ consists of a single point in $\bP^{n}$, which we denote by $a_{i}$.
  Comparing dimensions, we see that $\lspan{a_{0}, \dots, a_{rd}} = X$.
  Let $\psi: \bP^{rd} \to \bP^{n}$ denote the linear map that maps $e_{i}$ to $a_{i}$.
  Since $v_{1, i} \in \lspan{W_{1}, \psi\br{e_{i}}}$, it follows that $\lspan{W_{1}, \psi(e_{i})} = \lspan{W_{1}, \phi_{1}(e_{i})} = \lspan{W_{1}, v_{i, 1}}$.

  The image of the map $\psi$ is exactly $X$, and so $\psi$ is an isomorphism between $\bP^{rd}$ and $X$.
  Since $V \subseteq X$, the preimage of $V$ under $\psi$ is also a variety of dimension $r$ and degree $d$.
  Because $\cH = \bc{Y_{1}, \dots, Y_{T}}$ is $(rd, d)$-evasive, there is some $Y_j$ such that $Y_j \cap \psi^{-1}(V) = \varnothing$; without loss of generality, we may assume that $Y_{1}$ does not intersect $\psi^{-1}(V)$.
  The choice of the map $\psi$ implies that $Z_{1, 1} = \lspan{W_{1}, \psi(Y_{1})}$.
  Now suppose towards a contradiction that there is a point $b \in V \cap Z_{1, 1}$.
  The point $b$ lies in $\lspan{\psi(c), w}$, for some $c \in Y_{1}$ and $w \in W_{1}$.
  Since $b, \psi(c) \in X$ and $X \cap W_{1} = \emptyset$, we deduce $b = \psi(c)$ in $\bP^{n}$.
  This contradicts the fact that $Y_{1} \cap \psi^{-1}(V) = \emptyset$.

  Since $V$ was an arbitrary equidimensional variety, we can invoke \cref{lemma: reduction to codim k} to conclude that the constructed $k$-subspace is $(n, d)$-evasive.
  To construct an $(n,d,\eps)$-evasive $k$-subspace family, the only part of the construction that changes is that the $(n,1)$-evasive subspace family used in the first step must be replaced by an $(n,1,\eps)$-evasive subspace family.
  The Forbes--Shpilka construction of rank extractors also produces this $(n,1,\eps)$-evasive subspace family, incurring only a $1/\eps$ multiplicative increase in the size of the subspace family.
\end{proof}

Finally, we show that any $(n,d,\eps)$-evasive $k$-subspace family immediately yields an $(n,d,\eps')$-evasive affine $k$-subspace family with only a small loss in the parameter $\eps'$.

\begin{lemma}
  \label{lemma: strong affine evasive}
  Let $\cH$ be an $(n,d,\eps)$-evasive $k$-subspace family.
  Let $\cH' \subseteq \cH$ be the subfamily of subspaces not contained in the hyperplane at infinity.
  Then the restriction of $\cH'$ to the affine chart of points at finite distance is a strongly $(n, d, 2 \varepsilon / (1 - \varepsilon))$-evasive affine $k$-subspace family.
\end{lemma}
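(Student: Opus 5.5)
The plan is to fix an affine variety $V \subseteq \bA^n$ of degree at most $d$ and pass to its projective closure. Let $\overline{V} \subseteq \bP^n$ be this closure and $H_\infty$ the hyperplane at infinity. Each irreducible component $V_i$ of $V$ has closure $\overline{V_i}$ of the same dimension and degree. Moreover, $\overline{V_i} \cap H_\infty$ is a variety of dimension $\dim V_i - 1$ (or empty if $V_i$ is a point), and by Bézout's inequality (\cref{lemma: bezout}) its degree is at most $\deg V_i$. I would therefore consider the auxiliary projective variety
\[
  U \coloneqq \overline{V} \cup H_\infty,
\]
or, to control degree, handle $\overline{V}$ and $\overline{V} \cap H_\infty$ separately. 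Each of these has degree at most $d$, and $H_\infty$ itself has degree $1 \le d$. Applying the $(n,d,\eps)$-evasiveness of $\cH$ to the two varieties $\overline{V}$ and $\overline{V}\cap H_\infty$ and taking a union bound, a uniformly random $W \in \cH$ evades both with probability at least $1-2\eps$. Applying it to $H_\infty$ shows that at most an $\eps$ fraction of $\cH$ fails to evade $H_\infty$. Any $k$-subspace with $k \le n-1$ that is contained in $H_\infty$ has $\dim(W\cap H_\infty) = k > k-1$, so it fails to evade $H_\infty$. Hence $|\cH'| \ge (1-\eps)|\cH|$.

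Conditioning on $W \in \cH'$ then multiplies failure probabilities by at most $1/(1-\eps)$, so a random $W \in \cH'$ evades both $\overline{V}$ and $\overline{V}\cap H_\infty$ except with probability at most $2\eps/(1-\eps)$. It remains to show that such a $W$, restricted to $\bA^n$, strongly evades $V$. Fix a component $V_i$ of dimension $m$. The upper bound $\dim(V_i \cap W) \le m + k - n$ follows immediately from $W$ evading $\overline{V_i}$, since the affine part of an intersection is contained in the projective intersection.

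For the equality when $m + k - n \ge 0$, I would use the projective dimension theorem: every component of $\overline{V_i} \cap W$ has dimension at least $m+k-n$, and this intersection is nonempty. So $\overline{V_i}\cap W$ has dimension exactly $m+k-n$. Its part at infinity is contained in $\overline{V_i} \cap H_\infty \cap W$. Since $W$ evades $\overline{V}\cap H_\infty$, whose dimension is at most $m-1$ on the relevant component, that part has dimension at most $m-1+k-n < m+k-n$. Consequently some component of $\overline{V_i}\cap W$ of dimension $m+k-n$ is not contained in $H_\infty$. Its affine part is then a nonempty open subset of dimension $m+k-n$ inside $V_i\cap W$, which gives equality.

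The main obstacle is bookkeeping. Evasiveness for reducible varieties is componentwise, but the components of $\overline{V}\cap H_\infty$ can have dimension less than $m-1$. This is harmless, since smaller dimension only strengthens the bound. The degree bound $\deg(\overline V\cap H_\infty)\le d$ comes from Bézout. I would also double-check the edge case where $V_i$ is a point ($m=0$), where the argument reduces to the point lying in $W$ or not.
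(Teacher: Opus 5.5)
Your overall route is the paper's: pass to $\overline{V}$ and its part at infinity, use evasion of $H_\infty$ to get $|\cH'|\ge(1-\eps)|\cH|$, and union-bound over two varieties. But the step ``since $W$ evades $\overline{V}\cap H_\infty$, whose dimension is at most $m-1$ on the relevant component, that part has dimension at most $m-1+k-n$'' does not follow.

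\textbf{The gap.} Evasion of a reducible variety is tested only on its irreducible components. A component of $\overline{V_i}\cap H_\infty$ need not be a component of $\overline{V}\cap H_\infty$. It can lie inside a component of \emph{larger} dimension coming from another $\overline{V_j}$. Evading that bigger component says nothing useful about $W\cap\overline{V_i}\cap H_\infty$. Your last paragraph checks the harmless direction (smaller components) and misses this one. The paper's written proof makes the same inference: it passes from evasion of $V_\infty$ to evasion of the part at infinity of a single component. So following it would not have rescued you.

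\textbf{A counterexample.} The failure is real. Take $n=3$, $k=2$, coordinates $[x_0:x_1:x_2:x_3]$, and $V=V_1\cup V_2\subseteq\bA^3$ with $V_1=\{x_2=0,\ x_3=1\}$ and $V_2=\{x_3=0\}$, so $\deg V=2$. Then $\overline{V_1}\cap H_\infty=\{p\}$ with $p=[0:1:0:0]$. This point lies on the line $\overline{V_2}\cap H_\infty=\{x_0=x_3=0\}$, which is all of $\overline{V}\cap H_\infty$. The plane $W=\{x_2=x_0\}$ is not contained in $H_\infty$. It meets $\overline{V_1}$ only in $p$, meets $\overline{V_2}$ in a line, and meets $\overline{V}\cap H_\infty$ only in $p$. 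So it evades both varieties you use. Yet the affine plane $W\cap\bA^3=\{x_2=1\}$ misses the line $V_1$, although $\dim V_1+k-n=0$. Hence no argument using only these two evasion conditions can work.

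\textbf{The fix.} Equidimensionalize the condition at infinity before taking the union.
For each $i$ with $c_i:=\dim V_i+k-n\ge 0$, choose a linear space $L_i$ of codimension $c_i$ such that $Z_i:=\overline{V_i}\cap H_\infty\cap L_i$ is equidimensional of dimension $n-k-1$. The choice must depend only on $V$; iterating \cref{lemma: small parameter intersection} is one way to do it. By B\'ezout, $\deg Z_i\le\deg V_i$.

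Then $U:=\bigcup_i Z_i$ is equidimensional of dimension $n-k-1$ and degree at most $d$. Its components are exactly those of the $Z_i$, so $W$ evades $U$ iff $W\cap U=\emptyset$. This forces $W\cap\overline{V_i}\cap H_\infty\cap L_i=\emptyset$. By the projective dimension theorem, $\dim(W\cap\overline{V_i}\cap H_\infty)\le c_i-1$, which is precisely what the rest of your argument needs.

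Applying evasiveness to $\overline{V}$, $U$ and $H_\infty$ then recovers the bound $2\eps/(1-\eps)$. Applying it to each $\overline{V_i}\cap H_\infty$ separately would instead cost a union bound over up to $d$ pieces.
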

\begin{proof}
  Since $\cH$ is $(n, d, \varepsilon)$-evasive, at most $\varepsilon$ fraction of the elements of $\cH$ are contained in $\var{x_{0}}$, the hyperplane at infinity, so $\abs{\cH'} \geq (1 - \varepsilon) \abs{\cH}$.
  For the rest of this proof, let $\varepsilon' \coloneqq \varepsilon / (1 - \varepsilon)$.
  The family $\cH'$ is $(n, d, \varepsilon')$-evasive.

  Let $\cH_{\infty}$ be the restriction to $\var{x_{0}}$ of those subspaces in $\cH'$.
  This is a $(k-1)$-subspace family of size at least $(1-\varepsilon) \abs{\cH}$.
  Let $V \subseteq \bP^{n-1}$ be any variety of degree at most $d$.
  We can identify $\bP^{n-1}$ with the hyperplane at infinity in $\bP^{n}$.
  This way, $V$ can be viewed as a variety in $\bP^{n}$.
  All except $\varepsilon \abs{\cH}$ elements in $\cH'$ evade $V$.
  Therefore, $\cH_{\infty}$ is a $(n-1, d, \varepsilon')$-variety evasive subspace family.

  We now show that the restriction of $\cH'$ to the affine chart is strongly $(n, d, 2 \varepsilon')$-evasive for affine varieties.
  Let $V \subseteq \bA^{n}$ be an affine variety of degree $d$.
  Identifying $\bA^{n}$ with the affine chart, we can consider $V \subseteq \bP^{n}$.
  Let $\overline{V}$ be the projective closure of $V$, and let $V_{\infty}$ denote $\overline{V} \cap \var{x_{0}}$.
  Both $\overline{V}$ and $V_{\infty}$ have degree at most $d$.

  Suppose $H \in \cH'$ is such that $H$ evades $\overline{V}$ and $H \cap \var{x_{0} = 0}$ evades $V_{\infty}$.
  Let $W$ be an irreducible component of $V$, let $\overline{W}$ be the projective closure of $W$, and let $W_{\infty}$ denote $\overline{W} \cap \var{x_0}$.
  By assumption, every component of $\overline{W} \cap H$ has dimension $\dim \overline{W} + k - n$.
  If $\dim \overline{W} + k - n < 0$, then this intersection is empty.
  In this case, $W \cap H \cap \bA^{n}$ is also empty, and therefore $H \cap \bA^{n}$ evades $W$.
  Otherwise, we have $\dim \overline{W} + k - n \geq 0$.

  To show that $H \cap \bA^{n}$ evades $W$, it suffices to show that $\overline{W} \cap H \not\subseteq \var{x_{0}}$.
  This will guarantee that $W \cap H \cap \bA^{n}$ is nonempty and has the correct dimension.
  To this end, note that $\dim W_{\infty} = \dim W - 1$.
  Since $H \cap \var{x_{0}}$ evades $W_{\infty}$, every component of $\overline{W} \cap H \cap \var{x_{0}}$ has dimension $\dim \overline{W} + k - n - 1$.
  Since every component of $\overline{W} \cap H$ has dimension $\dim \overline{W} + k - n$, no component of $\overline{W} \cap H$ lies completely in the hyperplane at infinity, completing the proof that $H \cap \bA^{n}$ evades $W$.
  Since this holds for every component $W$ of $V$, we deduce that $H \cap \bA^{n}$ evades $V$.

  Because $\cH'$ and $\cH_{\infty}$ are $(n,d,\eps')$-evasive and $(n-1,d,\eps')$-evasive, respectively, the fraction of hyperplanes in $\cH'$ such that either $H$ does not evade $\overline{V}$ or such that $H \cap \var{x_{0}}$ does not evade $V_{\infty}$ is at most $2 \varepsilon'$.
  Therefore, the restriction of $\cH'$ to the affine chart is a strongly $(n, d, 2\varepsilon')$-evasive affine $k$-subspace family.
\end{proof}

%% file: sections/construction.tex
We give two constructions of variety-evasive subspace families.
The first is a simple construction meant to illustrate our main technical observation.
The second is a more involved construction that allows us to obtain improved epsilon variety-evasive subspace families.

\subsection{Basic construction} \label{subsection: basic construction}

Let $V \subseteq \bP^{n}$ be an projective variety of dimension $r$.
The intersection of $V$ with a general hyperplane is a variety of dimension $r-1$.
Typically, to obtain an effective version of this fact, one picks a point on each component of $V$ and controls the probability that a randomly-chosen hyperplane $H$ does not pass through the chosen points.
If $H$ does not pass through these points, then $H$ properly intersects each component of $V$, so the intersection $V \cap H$ will have dimension exactly $r-1$.

We implement this argument in the following lemma.
Here, the coefficients of the hyperplane $H$ are not picked independently, but are instead chosen from points of the form $(1, \gamma, \gamma^2, \ldots, \gamma^n)$ for $\gamma \in \bF$.

\begin{lemma}
  \label{lemma: small parameter intersection}
  Let $V \subseteq \bP^{n}$ be a projective variety of dimension $r$ and degree $d$.
  Let $V_{1}, \dots, V_{t}$ be the irreducible components of $V$.
  There exists a polynomial $P \in \bF\bs{z}$ of degree at most $n d$ such that for any $\gamma$ with $P(\gamma) \neq 0$, every hyperplane section $V_{i} \cap \var{\sum_{j=0}^{n}\gamma^{j} x_{j}}$ has dimension $\dim V_{i} - 1$.
\end{lemma}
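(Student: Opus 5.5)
We follow the strategy sketched just before the statement: pick a point on each irreducible component and ensure the hyperplane misses all of them. For each component $V_i$ with $\dim V_i \ge 1$, we choose a point $p_i = (p_{i,0} : \cdots : p_{i,n}) \in V_i$. Components of dimension $0$ are single points, and we handle them the same way by choosing $p_i$ to be that point. (For a zero-dimensional component the conclusion ``dimension $-1$'' means empty intersection, which is exactly the condition that the hyperplane avoids $p_i$.) We then define
\[
  P(z) \coloneqq \prod_{i=1}^{t} \Big( \sum_{j=0}^{n} p_{i,j} z^{j} \Big).
\]
Each factor is a nonzero polynomial, since some coordinate $p_{i,j}$ is nonzero, and each factor has degree at most $n$. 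By Bézout's inequality, or simply by the convention that the degree of $V$ is the sum of the degrees of its components, we have $t \le d$. Hence $\deg P \le nd$.

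Next, suppose $P(\gamma) \ne 0$ and let $H_\gamma \coloneqq \var{\sum_j \gamma^j x_j}$. Every factor is then nonzero at $\gamma$, so $p_i \notin H_\gamma$ for every $i$. In particular $V_i \not\subseteq H_\gamma$. Since $V_i$ is irreducible and $H_\gamma$ is a hypersurface not containing it, the standard dimension theorem (Krull's principal ideal theorem in geometric form, \cite{shafarevich1994basic1}) applies. It says every component of $V_i \cap H_\gamma$ has dimension exactly $\dim V_i - 1$ when $\dim V_i \ge 1$. In that case the intersection is also nonempty, because a projective variety of positive dimension meets every hypersurface. This gives the claimed dimension, and the zero-dimensional case was handled above.

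The only point needing care is the bookkeeping. We must check that $t \le d$ and that each linear form in $\gamma$ is genuinely nonzero, which the argument above already does. We must also confirm that the dimension statement is the one intended for zero-dimensional components. The geometric input is standard, so I do not expect a real obstacle.
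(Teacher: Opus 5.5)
Your proposal is correct and follows essentially the same route as the paper. The paper likewise picks one point $p_i$ on each component, sets $P(z)=\prod_{i=1}^{t}\sum_{j=0}^{n}p_{i,j}z^{j}$ of degree at most $nt\le nd$, and concludes that a nonvanishing linear form cuts each $V_i$ in dimension $\dim V_i-1$. Your explicit treatment of zero-dimensional components and of nonemptiness for positive-dimensional ones is bookkeeping that the paper leaves implicit.
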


\begin{proof}[Proof of \cref{lemma: small parameter intersection}]
  From each irreducible component $V_{i}$ of $V$, we pick a point $p_{i}$.
  The number of components of $V$ is bounded by the degree $d$.
  Consider the polynomial $Q(\vx,z) \coloneqq \sum_{i=0}^n z^{i} x_{i}$.
  The evaluation $Q(p_i, z)$ is a nonzero polynomial in $z$, since the point $p_i$ has at least one nonzero coordinate.
  Set $P(z) \coloneqq \prod_{i = 1}^t Q(p_{i}, z)$.
  
  It is clear that $P$ is a polynomial of degree $n t \le nd$.
  To show that $P$ satisfies the claimed property, let $\gamma \in \bF$ be a point such that $P(\gamma) \neq 0$ and consider the linear form $L(\vx) \coloneqq \sum_{i=0}^n \gamma^i x_i$.
  By construction, we have that $L(p_i) = Q(p_i, \gamma)$.
  Because $P(\gamma) \neq 0$, it follows that $Q(p_i, \gamma) \neq 0$.
  In particular, the linear form $L$ does not vanish identically $V_i$.
  Therefore $V_{i} \cap \var{L}$ has dimension $\dim V_{i} - 1$ by \cite[Chapter~9, Section~4, Corollary~4]{CLO15}.
\end{proof}

We can also deduce a similar statement for affine varieties.

\begin{lemma}
  \label{lemma: small parameter intersection affine}
  Let $V \subseteq \bA^{n}$ be an affine variety of degree $d$.
  Let $V_{1}, \dots, V_{t}$ be the irreducible components of $V$.
  There exists a polynomial $P \in \bF\bs{z}$ of degree at most $n d$ such that for any $\gamma$ with $P(\gamma) \neq 0$, every hyperplane section $V_{i} \cap \var{1 + \sum_{j=1}^{n}\gamma^{j} x_{j}}$ has dimension $\dim V_{i} - 1$.
\end{lemma}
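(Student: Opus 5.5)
The plan is to mirror the proof of \cref{lemma: small parameter intersection}, adapting it to the affine setting. From each irreducible component $V_i$ of $V$ we pick a point $p_i \in V_i \subseteq \bA^n$. The number $t$ of components is at most $d$, since the degree of $V$ is the sum of the degrees of its components and each component has degree at least $1$.

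Next we consider the polynomial $Q(\vx, z) \coloneqq 1 + \sum_{j=1}^n z^j x_j$. For a point $p = (p_1, \ldots, p_n) \in \bA^n$, the specialization $Q(p, z) = 1 + \sum_{j=1}^n p_j z^j$ has constant term $1$. So it is a nonzero polynomial in $z$ of degree at most $n$, with no case analysis on whether $p$ is the origin. This is in fact slightly easier than the projective case. We set $P(z) \coloneqq \prod_{i=1}^t Q(p_i, z)$. This is a nonzero polynomial of degree at most $nt \le nd$.

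Now fix $\gamma$ with $P(\gamma) \neq 0$ and let $L(\vx) \coloneqq 1 + \sum_{j=1}^n \gamma^j x_j$. Then $L(p_i) = Q(p_i, \gamma) \neq 0$ for every $i$, so $L$ does not vanish identically on any $V_i$. Since $V_i$ is irreducible, the affine version of the hypersurface intersection theorem shows that every component of $V_i \cap \var{L}$ has dimension exactly $\dim V_i - 1$. This is Krull's principal ideal theorem applied to $L$ in the coordinate ring of $V_i$ (cf.\ \cite[Chapter~9, Section~4]{CLO15} or \cite{shafarevich1994basic1}); the intersection may be empty, in which case the conclusion is read vacuously.

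The one step needing care, which I expect to be the only obstacle, is how to read ``has dimension $\dim V_i - 1$'' in the affine setting. In affine space, $V_i \cap \var{L}$ can be empty even when $\dim V_i \ge 1$: for example, a line parallel to the hyperplane. Krull only guarantees that every nonempty component has dimension $\dim V_i - 1$. If the statement requires nonemptiness, I would pass to projective closures. I would apply \cref{lemma: small parameter intersection}-style reasoning to $\overline{V_i}$ together with the points at infinity $\overline{V_i} \cap \var{x_0}$, adding extra factors to $P$. This would keep $\deg P \le nd$, possibly with a slightly adjusted constant. It would also ensure that the homogenized form $x_0 + \sum_j \gamma^j x_j$ does not contain a component of $\overline{V_i}$ at infinity, in the same spirit as \cref{lemma: strong affine evasive}. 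As stated, though, I would present the direct argument with the upper-bound/equality-when-nonempty interpretation.
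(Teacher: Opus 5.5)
There is a genuine gap: your direct argument proves a weaker statement than the lemma, and the weaker reading is not available. For $\dim V_i \ge 1$, ``has dimension $\dim V_i - 1$'' must include nonemptiness. This lemma is the input to the affine half of \cref{theorem: easy construction}. Strong evasion requires $\dim(V \cap W) = \dim V + \dim W - n$ whenever the right-hand side is nonnegative, whereas ``every component, if any, has dimension $\dim V_i - 1$'' only gives ordinary evasion.

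Your $P$ cannot guarantee nonemptiness. Every factor $Q(p_i, z)$ has constant term $1$, so $P(0) = 1 \neq 0$. Yet $\gamma = 0$ gives the empty set $\var{1}$, so the conclusion fails for every $V$ with a positive-dimensional component. The failure is not confined to $\gamma = 0$. For $V = \var{x_1 - x_2} \subseteq \bA^2$, every $p \in V$ has $Q(p,-1) = 1 - p_1 + p_2 = 1$, yet $\var{1 - x_1 + x_2}$ is a line parallel to $V$ and disjoint from it. Points of $V$ at finite distance cannot detect a hyperplane parallel to a component.

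The missing idea is to detect parallelism at infinity. Your fallback gestures at this but does not carry it out; it is exactly the paper's proof.
\begin{itemize}
\item Pass to projective closures and use $Q(\vx, z) = \sum_{j=0}^n z^j x_j$.
\item For each component with $\dim \overline{V}_i \ge 1$, pick one point on each irreducible component of $\overline{V}_i \cap \var{x_0}$; these all have dimension $\dim \overline{V}_i - 1$. For a zero-dimensional component, take its unique point.
\item If $L = \sum_j \gamma^j x_j$ is nonzero at all these points, then $L$ does not vanish identically on $\overline{V}_i$. So $\overline{V}_i \cap \var{L}$ is nonempty and pure of dimension $\dim \overline{V}_i - 1$, while $\overline{V}_i \cap \var{L, x_0}$ has dimension $\dim \overline{V}_i - 2$.
\item Hence no component of $\overline{V}_i \cap \var{L}$ lies at infinity, and $V_i \cap \var{\ell}$ is nonempty of dimension $\dim V_i - 1$, where $\ell = L(1, x_1, \ldots, x_n)$.
\end{itemize}

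The degree bound also needs care. The statement demands exactly $\deg P \le nd$, not a ``slightly adjusted constant.'' You need \emph{no} finite points for positive-dimensional components. By \cref{lemma: bezout}, $\overline{V}_i \cap \var{x_0}$ has at most $\deg V_i$ components, so there are at most $d$ points in all and $\deg P \le nd$. Adding points at infinity on top of your finite points could double this. Finally, points at infinity have $x_0 = 0$, so $z$ divides $Q(p,z)$. This is how $\gamma = 0$ is excluded automatically whenever there is a positive-dimensional component.
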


\begin{proof}[Proof of \cref{lemma: small parameter intersection}]
  Let $\overline{V}$ denote the projective closure of $V$.
  The number of components of $\overline{V}$ is bounded by the degree $d$.
  Each irreducible component $\overline{V}_{i}$ of $\overline{V}$ is the projective closure of a component $V_{i}$ of $V$.
  Therefore, for every $i$ we have $\dim (\overline{V}_{i} \cap \var{x_{0}}) = \dim \overline{V}_{i} - 1$.
  For each $i$ such that $\dim \overline{V}_{i} \geq 1$, and for each irreducible component $\overline{V}_{i, j}$ of $\overline{V}_{i} \cap \var{x_{0}}$, pick a point $p_{i, j} \in \overline{V}_{i,j}$.
  For those $i$ such that $\dim \overline{V}_{i} = 0$, let $p_{i, 1}$ be the single point of $\overline{V}_{i}$.

  Consider the polynomial $Q(\vx, z) \coloneqq \sum_{i=0}^n z^{i} x_{i}$.
  If we evaluate at $p_{i, j}$, we get $Q(p_{i, j}, z)$, which is a nonzero polynomial in $\bF\bs{z}$, since $p_{i, j}$ has at least one nonzero coordinate.
  Set $P(z) \coloneqq \prod_{i, j} Q(p_{i}, z)$.
  If we pick $\gamma \in \bF$ such that $P(\gamma) \neq 0$, then the linear form $L(\vx) \coloneqq \sum_{i=0}^n \gamma^{i} x_{i}$ is nonzero at every $p_{i, j}$.
  From this, we can deduce that $\dim(\overline{V}_{i} \cap \var{L, x_{0}}) = \dim \overline{V}_{i} - 2$ for every irreducible component $\overline{V}_{i}$ of $\overline{V}$ of dimension at least one.
  Therefore, for any such component, we must have $\dim (V_{i} \cap \var{\ell}) = \dim V_{i} - 1$, where $\ell(\vx) \coloneqq L(1, x_{1}, \dots, x_{n})$.
  For the zero-dimensional components of $V$, we can also deduce that $V_{i} \cap \var{\ell} = \emptyset$, since $L$ does not vanish at these components.
\end{proof}

\cref{lemma: small parameter intersection} allows us to construct a set $\cH$ of $nd+1$ hyperplanes such that for any projective variety $V \subseteq \bP^n$ of degree $d$, there is some $H \in \cH$ that properly intersects $V$, and \cref{lemma: small parameter intersection affine} does the same for affine varieties.
By iterating this construction $k$ times, we can obtain an $(n,d)$-evasive (affine) $k$-subspace family.
We remark that this solves \cite[Open Problem 7]{Guo24}, which asked for an explicit construction of a polynomial-size $(n,d)$-evasive $k$-subspace family when $n - k = O(1)$.

\simpleconstruction*
\begin{proof}
  We start by establishing the projective version of the theorem.
  Fix pairwise disjoint subsets $B_1, \ldots, B_{n-k} \subseteq \bF$, each of size $nd + 1$ and not containing $0$.
  For each $\alpha \in \bF$, let $L_{\alpha}(\vx)$ denote the linear form $x_{0} + \alpha x_{1} + \cdots + \alpha^{n} x_{n}$.
  For each $\vgamma \in \bF^{n-k}$, let $H_{\vgamma} \subseteq \bP^{n}$ denote the linear subspace $\var{L_{\gamma_{1}}, \dots, L_{\gamma_{n-k}}}$.
  Let $\cH$ denote the collection of subspaces $H_{\vgamma}$ where $\vgamma$ varies over elements in $B_1 \times \cdots \times B_{n-k}$.
  Finally, let $\cH'$ denote the restriction of each element of $\cH$ to the affine chart $x_{0} = 1$.
  We claim that $\cH$ and $\cH'$ are the required $(n,d)$-evasive $k$-subspace and affine $k$-subspace families, respectively.
  The bound on the sizes of $\cH$ and $\cH'$ are clear from construction.
  To algorithmically produce $\cH$ and $\cH'$, we simply iterate over the set $B_1 \times \cdots \times B_{n-k}$ and print the subspace $H_{\vgamma}$ for every $\vgamma \in B_1 \times \cdots \times B_{n-k}$.

  To see why $\cH$ is $(n,d)$-evasive, let $V \subseteq \bP^{n}$ be any variety of degree $d$.
  By \cref{lemma: small parameter intersection}, there exists a polynomial $P_{1} \in \bF\bs{z}$ of degree $nd$ such that for any $\gamma_{1} \in \bF$ satisfying $P_{1}(\gamma_{1}) \neq 0$, the intersection of $V_{i}$ and $\var{L_{\gamma_{1}}}$ has dimension $\dim V_{i} - 1$ for every irreducible component $V_{i}$ of $V$.
  By Bézout's inequality (\cref{lemma: bezout}), the intersection $V \cap \var{L_{\gamma_{1}}}$ has degree at most $d$.
  Since the set $B_1$ has size $nd + 1$, we can find such an element $\gamma_{1} \in B_1$.
  Fix this choice of $\gamma_1$.
  Now invoke \cref{lemma: small parameter intersection} for the variety $V \cap \var{L_{\gamma_{1}}}$.
  This gives us a polynomial $P_{2}$ of degree $nd$ whose non-roots define hyperplanes that properly intersect $V \cap \var{L_{\gamma_1}}$.
  Because $|B_2| = nd + 1$, we can find a nonroot $\gamma_{2}$ of $P_{2}$ that lies in $B_2$.
  Moreover, since $B_1$ and $B_2$ are disjoint, we are guaranteed that $\gamma_2$ is different from $\gamma_{1}$.
  This implies that the intersection $V_i \cap \var{L_{\gamma_1}} \cap \var{L_{\gamma_2}}$ has dimension $\dim V_i - 2$ for every irreducible component $V_i$ of $V$.
  Continuing in this manner, we obtain $(\gamma_{1}, \dots, \gamma_{n-k}) \in B_1 \times \cdots \times B_{n-k}$ such that the subspace $H_{\vgamma}$ evades $V$.

  The affine version for $\cH'$ follows the same argument, invoking \cref{lemma: small parameter intersection affine} in place of \cref{lemma: small parameter intersection}.
  The fact that no $B_i$ contains $0$ implies that each affine subspace in $\cH'$ has dimension $k$.
\end{proof}

\subsection{Constructing $(n,d,\eps)$-evasive subspace families} \label{subsection: full construction}

\cref{theorem: easy construction} constructs an $(n,d)$-evasive $k$-subspace family by building the subspaces iteratively, intersecting $n-k$ hyperplanes one at a time.
To improve this to a construction of an $(n,d,\eps)$-evasive $k$-subspace family, one natural approach is to ensure that at each step, the probability of picking a hyperplane that does not properly intersect a variety $V$ is bounded by $\frac{\eps}{n-k}$.
Applying a union bound over the $n-k$ choices of hyperplanes implies that at most an $\eps$ fraction of subspaces fail to evade $V$.
However, this enlarges the number of hyperplanes chosen at each step by a factor of $\frac{n-k}{\eps}$, which leads to a $(1/\eps)^{n-k}$ factor in the final size of the subspace family.

By arguing about the $n-k$ hyperplanes as a whole instead of one at a time, we can improve this $(1/\eps)^{n-k}$ factor to $(1/\eps)^{O(1)}$.
To do this requires the notion of the Chow form of a variety, which we now define.
The definition and facts we use about the Chow form are classical.
The following statements appear in \cite[Section~2.1.1]{KPS01}, and proofs can be found in \cite[Chapter~1]{philipponindep}.

\begin{definition}
  \label{definition: chow form}
  Let $V \subseteq \bP^{n}$ be an equidimensional variety of dimension $r$ and degree $d$.
  Let $\vu_{0}, \dots, \vu_{r}$ be sets of variables, each of size $n+1$.
  The \emph{Chow form of $V$}, denoted $\cC_{V}$, is a polynomial in $\bF\bs{\vu_{0}, \dots, \vu_{r}}$ with the following properties.
  \begin{itemize}
    \item For every choice of $\vgamma_{0}, \dots, \vgamma_{r} \in \bF^{n+1}$, the linear subspace $\var{\sum_{j=0}^{n} \gamma_{0, j} x_{j}, \dots, \sum_{j=0}^{n} \gamma_{r, j} x_{j}}$ has a point in $V$ if and only if $\cC_{V}(\vgamma_{0}, \dots, \vgamma_{r}) = 0$.
    \item The polynomial $\cC_{V}$ squarefree and homogeneous of degree $d$ in each set of variables $\vu_{i}$. \qedhere
  \end{itemize}
\end{definition}

In other words, the Chow form of a variety characterizes those linear subspaces of dimension at least $n-r+1$ that intersect $V$.
The following is an easy corollary of \cref{lemma: small parameter intersection}.

\begin{corollary} \label{cor: chow form hsg}
  Let $V \subseteq \bP^{n}$ be an equidimensional variety of dimension $r$ and degree $d$ with Chow form $\cC_{V} \in \bF\bs{\vu_{0}, \dots, \vu_{r}}$.
  Let $v_{0}, \dots, v_{r}$ be new variables and let $\phi: \bF\bs{\vu_{0}, \dots, \vu_{r}} \to \bF\bs{\vv}$ be the map defined by $\phi(u_{i, j}) = v_{i}^{j}$.
  Then $\phi\br{\cC_{V}} \neq 0$.
\end{corollary}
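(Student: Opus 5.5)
To prove that $\phi(\cC_V) \neq 0$, it suffices to find a single point $(\gamma_0, \dots, \gamma_r) \in \bF^{r+1}$ at which $\phi(\cC_V)$ does not vanish. By construction, $\phi(\cC_V)(\gamma_0,\dots,\gamma_r) = \cC_V(\vgamma_0,\dots,\vgamma_r)$, where $\vgamma_i \coloneqq (1,\gamma_i,\gamma_i^2,\dots,\gamma_i^n)$. By the first property in \cref{definition: chow form}, this value is nonzero exactly when the linear space $\var{L_{\gamma_0},\dots,L_{\gamma_r}}$ misses $V$, where $L_\alpha(\vx) = \sum_{j=0}^n \alpha^j x_j$. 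So the plan is to choose $\gamma_0,\dots,\gamma_r$ one at a time so that this intersection is empty. This is the same iterative argument as in the proof of \cref{theorem: easy construction}.

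First, apply \cref{lemma: small parameter intersection} to $V$. This gives a nonzero polynomial $P_0$ of degree at most $nd$. Since $\bF$ is algebraically closed and hence infinite, we can pick $\gamma_0$ with $P_0(\gamma_0)\ne 0$. Then every irreducible component of $V \cap \var{L_{\gamma_0}}$ has dimension $r-1$; here we use that $V$ is equidimensional of dimension $r$. By Bézout's inequality (\cref{lemma: bezout}), $V\cap\var{L_{\gamma_0}}$ has degree at most $d$. Now apply \cref{lemma: small parameter intersection} again to this new variety to obtain $P_1$, and pick a non-root $\gamma_1$. Continue in this way for $r+1$ steps. At step $i$ the current variety is equidimensional of dimension $r-i$ and degree at most $d$. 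After $r+1$ steps, each component has dimension $r-(r+1) = -1$, meaning $V\cap\var{L_{\gamma_0},\dots,L_{\gamma_r}} = \varnothing$.

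The one point that needs care is the last step, where the current variety is zero-dimensional, a finite set of points. We need the convention ``dimension $-1$ means empty'' to match what the lemma gives. In the lemma's proof, a non-root $\gamma$ means $L_\gamma$ does not vanish at the chosen point $p_i$ of each component. For a zero-dimensional component, that point is the whole component, so the hyperplane section is empty, as required. I would state this explicitly rather than rely on the cited dimension fact from \cite{CLO15}. With that observed, the intersection is empty, so $\cC_V(\vgamma_0,\dots,\vgamma_r)\ne0$, and hence $\phi(\cC_V)\neq 0$.

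A quantitative bonus follows from the same argument: the $\gamma_i$ can be taken from any sets of size $nd+1$. This will presumably matter later for the hitting set, but it is not needed for the corollary as stated.
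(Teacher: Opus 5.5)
Your proposal is correct and matches the paper's proof: apply \cref{lemma: small parameter intersection} $r+1$ times to choose $\gamma_0,\dots,\gamma_r$ so that $V \cap \var{L_{\gamma_0},\dots,L_{\gamma_r}}$ is empty, then use the defining property of the Chow form to get $\phi(\cC_V)(\gamma_0,\dots,\gamma_r)\neq 0$. The paper leaves implicit that at the zero-dimensional last step the hyperplane section is empty, and you check this explicitly, which is a small clarification of the same argument.
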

\begin{proof}
  Let $\gamma_{0} \in \bF$ be such that $V_{1} \coloneqq V \cap \var{x_{0} + \gamma_{0} x_{1} + \cdots + \gamma_{0}^{n} x_{n}}$ is equidimensional of dimension $r-1$.
  Such an element exists by \cref{lemma: small parameter intersection}.
  The variety $V_{1}$ now is equidimensional of dimension $r-1$, and we can invoke \cref{lemma: small parameter intersection} again to find a $\gamma_{1} \in \bF$ such that $V_{2} \coloneqq V_{1} \cap \var{x_{0} + \gamma_{1} x_{1} + \cdots + \gamma_{1}^{n} x_{n}}$ is equidimensional of dimension $r-2$.
  Repeating this, we find $\gamma_{0}, \dots, \gamma_{r}$ such that $V \cap \var{\sum_{j=0}^{n} \gamma_{0}^{j} x_{j}, \dots, \sum_{j=0}^{n} \gamma_{r}^{j} x_{j}}$ is empty.
  By \cref{definition: chow form}, we have $\phi(\cC_{V})(\gamma_{0}, \dots, \gamma_{r}) \neq 0$, which shows in particular that $\phi(\cC_{V})$ is nonzero.
\end{proof}

In the notation of \cref{cor: chow form hsg}, the polynomial $\phi(\cC_V)$ is a nonzero polynomial of individual degree $nd$ in $r+1$ variables with the property that the subspace defined by the linear forms with coefficients $\vgamma_0, \ldots, \vgamma_r$ evades $V$ if $\phi(\cC_V)(\vgamma_0, \ldots, \vgamma_r) \neq 0$.
The construction of \cref{theorem: easy construction} can be thought of as a construction of a hitting set for the restricted Chow form $\phi(\cC_V)$.
To obtain $(n,d,\eps)$-evasive $k$-subspace families, we instead apply an $\eps$-hitting set to the restricted Chow form $\phi(\cC_V)$.
An \emph{$\eps$-hitting set} has exactly the guarantee we need: it is a finite set $\cH$ such that a $1-\eps$ fraction of points $(\vgamma_0,\ldots,\vgamma_r) \in \cH$ satisfy $\phi(\cC_V)(\vgamma_0,\ldots,\vgamma_r) \neq 0$.
We make this observation precise in the following corollary.

\begin{corollary} \label{cor: hitting set to evasive}
  Suppose $\cP$ is an $\varepsilon$-hitting set for polynomials of individual degree $nd$ in $n-k$ variables.
  Then there exists a $(n, d, \varepsilon)$-evasive $k$-subspace family $\cH$ of size at most $\abs{\cP}$.
  Moreover, there is a deterministic algorithm that takes $\cP$ as input, prints the set $\cH$, and runs in time $\poly(n,\abs{\cP})$.
\end{corollary}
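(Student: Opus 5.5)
The plan is to turn each point of $\cP$ into a $k$-subspace. For each point $\vgamma = (\gamma_1, \dots, \gamma_{n-k}) \in \cP$ we take
\[
  H_{\vgamma} \coloneqq \var{L_{\gamma_1}, \dots, L_{\gamma_{n-k}}}, \qquad L_{\alpha}(\vx) \coloneqq x_0 + \alpha x_1 + \cdots + \alpha^n x_n ,
\]
as in the proof of \cref{theorem: easy construction}, and we set $\cH \coloneqq \bc{H_{\vgamma} : \vgamma \in \cP}$ as a multiset. Printing the $n-k$ linear forms takes $\poly(n)$ field operations per point, which gives the running time bound. By \cref{lemma: reduction to codim k}, it suffices to check that for every equidimensional $V \subseteq \bP^n$ of dimension $r \coloneqq n-k-1$ and degree at most $d$, a $1-\eps$ fraction of the $\vgamma \in \cP$ give an $H_{\vgamma}$ that evades $V$. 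Here evasion means $V \cap H_{\vgamma} = \varnothing$, since $\dim V + k - n = -1$.

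Fix such a $V$ and let $\cC_V \in \bF\bs{\vu_0, \dots, \vu_r}$ be its Chow form. There are exactly $r+1 = n-k$ blocks of variables. Apply the substitution $\phi$ of \cref{cor: chow form hsg}, sending $u_{i,j} \mapsto v_i^j$. The resulting polynomial $f \coloneqq \phi(\cC_V) \in \bF\bs{v_0, \dots, v_r}$ has $n-k$ variables and is nonzero by \cref{cor: chow form hsg}. Its degree in $v_i$ is at most $n \cdot \deg_{\vu_i} \cC_V = nd$, since $\cC_V$ is homogeneous of degree $d$ in $\vu_i$ and each $u_{i,j}$ becomes $v_i^j$ with $j \le n$. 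So $f$ lies in the class that $\cP$ hits, and at least a $1-\eps$ fraction of $\vgamma \in \cP$ satisfy $f(\vgamma) \neq 0$. Take one such $\vgamma$, indexing its coordinates $\gamma_0, \dots, \gamma_r$ to match the blocks. Then $\cC_V$ is nonzero at the point whose $i$-th block is $(1, \gamma_i, \dots, \gamma_i^n)$. These are exactly the coefficient vectors of $L_{\gamma_i}$, so by \cref{definition: chow form} the space $H_{\vgamma}$ contains no point of $V$, and $H_{\vgamma}$ evades $V$.

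The step needing care is $\dim H_{\vgamma} = k$ exactly, i.e.\ that $L_{\gamma_0}, \dots, L_{\gamma_r}$ are linearly independent. These are Vandermonde rows, so they are independent when the $\gamma_i$ are distinct. When $f(\vgamma) \neq 0$, $H_{\vgamma}$ misses $V$, which has dimension $r$; a linear space of dimension $\ge n-r = k+1$ would have to meet $V$, so $\dim H_{\vgamma} \le k$. Combined with $\dim H_{\vgamma} \ge n-(n-k) = k$, every good $\vgamma$ gives a subspace of dimension exactly $k$. For the bad points, where the $\gamma_i$ may collide, we replace $H_{\vgamma}$ by an arbitrary $k$-subspace containing it, or by any fixed $k$-subspace if it has larger dimension. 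This does not affect the good fraction. The family then has size $\abs{\cP}$, counted with multiplicity, and is $(n,d,\eps)$-evasive.
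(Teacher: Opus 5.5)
Your proposal is correct and follows the paper's argument: the same Vandermonde-type subspaces $H_{\vgamma}$ indexed by $\cP$, nonvanishing of $\phi(\cC_V)$ (individual degree at most $nd$) from \cref{cor: chow form hsg}, the defining property of the Chow form, and \cref{lemma: reduction to codim k}. The only difference is cosmetic: the paper discards the subspaces of dimension larger than $k$, which lie in the exceptional set for every $V$, while you replace them by a fixed $k$-subspace; this works for the same reason and keeps the size exactly $\abs{\cP}$.
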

\begin{proof}
  The evasive family of subspaces is constructed as follows.
  For each point $\vmu \in \cP$, let $H_{\vmu}$ be the linear subspace defined by the linear forms $x_{0} + \mu_{i} x_{1} + \cdots + \mu_{i}^{n} x_{n}$ for $i \in [n-k]$. 
  Let $\cH'$ be the resulting family of subspaces.

  Let $V$ an arbitrary equidimensional variety of dimension $n - k - 1$ and degree $d$, and let $\cC_{V}$ be its Chow form.
  Let $\phi$ be the map defined in \cref{cor: chow form hsg}, so $\phi(\cC_{V})$ is a nonzero polynomial of individual degree $nd$ in $n-k$ variables.
  Since $\cP$ is a $\varepsilon$-hitting set for such polynomials, for all except $\varepsilon \abs{\cP}$ elements in $\cP$, we have $\phi(\cC_{V})(\vmu) \neq 0$.
  Therefore, for all except $\varepsilon \abs{\cH'}$ of the subspaces $H \in \cH'$, we have $V \cap H = \emptyset$.

  Note that the exceptional set of $\varepsilon \abs{\cH'}$ vector spaces $H \in \cH'$ such that $V \cap H \neq \emptyset$ contains all those subspaces in $\cH'$ that are not $k$-dimensional.
  Therefore, if we define $\cH$ to be the set of $k$-dimensional subspaces in $\cH'$, then all except $\varepsilon \abs{\cH}$ elements in $\cH$ evade $V$.
  Since $V$ was an arbitrary equidimensional variety of dimension $n-k-1$ and degree $d$, we can invoke \cref{lemma: reduction to codim k} to deduce that $\cH$ is $(n, d, \varepsilon)$-evasive.

  It is clear that the set $\cH'$ can be obtained in polynomial time from the $\eps$-hitting set $\cP$.
  To prune $\cH'$ to the $(n,d,\eps)$-evasive $k$-subspace family $\cH$, we need to remove from $\cH'$ those subspaces that are not $k$-dimensional.
  This is a basic linear algebra calculation that can be performed in $\poly(n)$ time for each element of $\cH'$.
\end{proof}

Finally, we need an explicit construction of $\eps$-hitting sets.
We do this using a construction of \textcite{AGKS15}, as instantiated by \textcite{Guo24}.
The version we state here is adapted to polynomials of bounded individual degree instead of bounded total degree, but follows exactly the same proof as \cite[Lemma 2.3]{Guo24}.

\begin{lemma}[{\cite[Lemma~2.3]{Guo24}, \cite[Lemma~4]{AGKS15}}]
  \label{lemma: epsilon hitting sets}
  For every $d, n \in \bN$ and $\varepsilon \in \br{0, 1}$, there exists an $\varepsilon$-hitting set $\cP$ for the set of all polynomials of individual degree $d$ in $n$ variables.
  The set $\cP$ has size and bit complexity $\poly\br{(d+1)^n, \varepsilon^{-1}}$ and can be constructed in time $\poly(\abs{\cP})$.
\end{lemma}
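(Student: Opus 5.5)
We will prove \cref{lemma: epsilon hitting sets} by the standard Kronecker-substitution-plus-small-prime-moduli approach of \textcite{AGKS15}, following the proof of \cite[Lemma 2.3]{Guo24} verbatim except for how the number of monomials is counted. The key point is that a polynomial $f$ of individual degree $d$ in $n$ variables is supported on the box $\bc{0,\dots,d}^n$. First, apply the Kronecker substitution $x_i \mapsto y^{(d+1)^{i-1}}$. This map is injective on monomials in the box, so $f$ becomes a nonzero univariate polynomial $g(y)$ of degree $D < (d+1)^n$.

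The hitting set is then built in two stages, which will be the heart of the construction. Rather than evaluate $g$ on $D+1$ points, which already costs $(d+1)^n$, we reduce the degree modulo $y^p - 1$ for a small prime $p$. For a fixed nonzero $f$, we take its lowest (or any fixed) monomial $y^{e}$ and consider the differences $e - e'$ over the other support exponents $e'$. There are at most $(d+1)^n$ of these, each of magnitude below $(d+1)^n$, so each has at most $n\log(d+1)$ prime divisors. Consequently, among the first $M = \poly((d+1)^n,\eps^{-1})$ primes, all but an $\eps/2$ fraction keep $y^{e}$ isolated modulo $y^p-1$, i.e.\ $g \bmod (y^p-1)$ is nonzero. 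The choice of $M$ is made precisely to achieve this $\eps/2$ fraction.

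For each such prime $p$ we then choose the evaluation points. We evaluate at the points $\va_{p,c} = (c^{(d+1)^{0} \bmod p}, \dots, c^{(d+1)^{n-1} \bmod p})$, where $c$ ranges over a set $S$ of $\poly(p,\eps^{-1})$ field elements. Note that $f(\va_{p,c}) = \tilde g_p(c)$, where $\tilde g_p$ is obtained by reducing exponents modulo $p$ \emph{before} evaluation. Strictly speaking, the substitution $x_i \mapsto z^{(d+1)^{i-1}\bmod p}$ yields a polynomial of degree at most $n d p$ whose exponents are congruent to those of $g \bmod (y^p-1)$. That polynomial is nonzero whenever the isolated exponent survives, since no other term can collide with it because their exponents are distinct modulo $p$. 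By Schwartz--Zippel, choosing $|S| \ge 2ndp/\eps$ ensures that at most an $\eps/2$ fraction of the $c$'s fail. Taking $\cP = \bc{\va_{p,c}}$ over all $p$ among the first $M$ primes and all $c \in S$, a union bound gives failure fraction at most $\eps$.

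Finally, the size is $M\cdot|S| = \poly((d+1)^n,\eps^{-1})$, since $p \le \poly(M)$ by the prime number theorem. The primes can be found by trial division, and the points have bit complexity $\poly$ in these parameters if $S\subseteq \bZ$ (or $\bF_q$ in positive characteristic, using an extension of size $\poly$). The main obstacle is the prime-counting step. There we must argue carefully that for every nonzero $f$ a $1-\eps/2$ fraction of the chosen primes preserve nonzeroness, and count collisions only against a single isolated monomial so that the bound involves $|\mathrm{supp}(f)| \le (d+1)^n$ rather than its square. This is exactly where bounded individual degree replaces the $\binom{n+d}{d}$ count of Guo's total-degree version.
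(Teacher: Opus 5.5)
Your proposal is correct and follows essentially the route the paper intends: the paper simply defers to \cite[Lemma~2.3]{Guo24}, which combines the weight assignments $x_i \mapsto z^{(d+1)^{i-1} \bmod p}$ over many small primes (the AGKS15 efficient Kronecker map) with univariate evaluation, and the only change needed is to count monomials in the box $\{0,\dots,d\}^n$ (at most $(d+1)^n$) instead of $\binom{n+d}{d}$, exactly as you do. One small point to make explicit is that you should use the same evaluation set $S$ for every prime, of size at least $2nd\,p_M/\eps$ where $p_M$ is the largest prime used, so that your union bound is with respect to the uniform distribution on $\cP$.
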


Combining \cref{cor: hitting set to evasive}, \cref{lemma: epsilon hitting sets}, and the reductions in \cref{section: preliminaries}, we obtain our final construction of $(n,d,\eps)$-evasive subspaces.

\maintheorem*

\begin{proof}
  We start by constructing an $(n,d,\eps)$-evasive $k$-subspace family.
  First suppose $n \leq (n-k)d$.
  Then the required $(n, d, \varepsilon)$-evasive $k$-subspace family can be constructed by using \cref{cor: hitting set to evasive} instantiated with the $\varepsilon$-hitting set from \cref{lemma: epsilon hitting sets}.
  The resulting variety-evasive subspace family has size and bit complexity $\poly\br{(nd+1)^{n-k},\eps^{-1}} \le \poly\br{\br{(n-k)d + 1}^{n-k}, \varepsilon^{-1}}$, since $n \le (n-k)d$.

  Now suppose $n > (n-k) d$.
  Let $r \coloneqq n - k - 1$.
  Using \cref{lemma: epsilon hitting sets} and \cref{cor: hitting set to evasive}, we can construct an $(rd, d, \varepsilon/2)$-evasive $(rd-r-1)$-subspace family.
  This family has size $\poly\br{(rd+1)^r, \eps^{-1}} \le \poly\br{\br{(n-k)d+1}^{n-k}, \varepsilon^{-1}}$.
  We can now invoke \cref{lemma: ambient reduction} to obtain the required $(n, d, \varepsilon)$-evasive subspace family.
  The size of the resulting family is also bounded by $\poly\br{\br{(n-k)d + 1}^{n-k}, \varepsilon^{-1}}$.

  To construct a strongly $(n, d, \varepsilon)$-evasive affine $k$-subspace family, we first pick $\varepsilon'$ such that $\varepsilon = 2 \varepsilon' / (1 - \varepsilon')$.
  We construct a $(n, d, \varepsilon')$-evasive $k$-subspace family using the above construction and then invoke \cref{lemma: strong affine evasive} to obtain the required $(n,d,\eps)$-evasive affine $k$-subspace family.
\end{proof}

%% file: refs.bib
@article{Vadhan12,
  author    = {Salil P. Vadhan},
  title     = {Pseudorandomness},
  journal   = {Foundations and Trends in Theoretical Computer Science},
  volume    = {7},
  number    = {1-3},
  pages     = {1--336},
  year      = {2012},
}

@book{CLO15,
  author    = {David A. Cox and
               John Little and
               Donal O'Shea},
  title     = {Ideals, varieties, and algorithms - an introduction to computational
               algebraic geometry and commutative algebra},
  edition   = {4},
  series    = {Undergraduate texts in mathematics},
  publisher = {Springer},
  year      = {2015},
  isbn      = {978-3-319-16720-6},
  doi = {10.1007/978-3-319-16721-3},
}

@inproceedings{FSS14,
  author    = {Michael A. Forbes and
               Ramprasad Saptharishi and
               Amir Shpilka},
  title     = {Hitting sets for multilinear read-once algebraic branching programs,
               in any order},
  booktitle = {\STOC{2014}},
  pages     = {867--875},
  year      = {2014},
  url       = {https://doi.org/10.1145/2591796.2591816},
  doi       = {10.1145/2591796.2591816},
  bibsource = {dblp computer science bibliography, https://dblp.org}
}

@article{AGKS15,
  author    = {Manindra Agrawal and
               Rohit Gurjar and
               Arpita Korwar and
               Nitin Saxena},
  title     = {Hitting-Sets for {ROABP} and Sum of Set-Multilinear Circuits},
  journal   = {{SIAM} J. Comput.},
  volume    = {44},
  number    = {3},
  pages     = {669--697},
  year      = {2015},
  url       = {https://doi.org/10.1137/140975103},
  doi       = {10.1137/140975103},
  bibsource = {dblp computer science bibliography, https://dblp.org}
}

@inproceedings{FS12,
  author = {Michael A. Forbes and Amir Shpilka},
  title = {On identity testing of tensors, low-rank recovery and compressed sensing},
  year = {2012},
  booktitle = {\STOC{2012}},
  pages = {163--172},
}

@article{GR08,
  author = {Ariel Gabizon and Ran Raz},
  title = {Deterministic extractors for affine sources over large fields},
  journal = {Combinatorica},
  volume = {28},
  pages = {415--440},
  year = {2008},
  doi = {10.1007/s00493-008-2259-3},
}

@article{KPS01,
  author = {Teresa Krick and Luis Miguel Pardo and Mart{\'i}n Sombra},
  title = {{Sharp estimates for the arithmetic Nullstellensatz}},
  volume = {109},
  journal = {Duke Mathematical Journal},
  number = {3},
  publisher = {Duke University Press},
  pages = {521--598},
  year = {2001},
  doi = {10.1215/S0012-7094-01-10934-4},
}

@article{Guo24,
  author = {Guo, Zeyu},
  doi = {10.1007/s00037-024-00256-1},
  isbn = {1420-8954},
  journal = {computational complexity},
  number = {2},
  pages = {10},
  title = {Variety Evasive Subspace Families},
  volume = {33},
  year = {2024},
}

@book{harris2013algebraic,
  title={Algebraic geometry: a first course},
  author={Harris, Joe},
  volume={133},
  year={2013},
  publisher={Springer Science \& Business Media},
  doi = {10.1007/978-1-4757-2189-8},
}

@article{Heintz83,
  author    = {Joos Heintz},
  title     = {{Definability} and fast quantifier elimination in algebraically closed fields},
  journal   = {Theor.~Comput.~Sci.},
  year      = {1983},
  volume    = {24},
  number    = {3},
  pages     = {239--277},
  doi = {10.1016/0304-3975(83)90002-6}
}

@book {shafarevich1994basic1,
    AUTHOR = {Shafarevich, Igor R.},
     TITLE = {Basic Algebraic Geometry 1},
   EDITION = {Second},
 PUBLISHER = {Springer Berlin, Heidelberg},
      YEAR = {1994},
     PAGES = {xx+304},
      ISBN = {3-540-54812-2},
       doi = {10.1007/978-3-642-57908-0},
}

@book{fulton1984introduction,
  title={Introduction to intersection theory in algebraic geometry},
  author={Fulton, William},
  volume={54},
  year={1984},
  publisher={American Mathematical Soc.}
}

@article {philipponindep,
  AUTHOR = {Philippon, Patrice},
  TITLE = {Crit\`eres pour l'ind\'ependance alg\'ebrique},
  JOURNAL = {Inst. Hautes \'Etudes Sci. Publ. Math.},
  FJOURNAL = {Institut des Hautes \'Etudes Scientifiques. Publications
              Math\'ematiques},
  NUMBER = {64},
  YEAR = {1986},
  PAGES = {5--52},
  ISSN = {0073-8301,1618-1913},
  MRCLASS = {11J85},
  MRNUMBER = {876159},
  MRREVIEWER = {Yao\ Chen\ Zhu},
  doi = {10.1007/BF02699191},
}
